\definecolor{linkblue}{rgb}{0.1,0,0.5}
\def\true{TRUE}
\def\bottom{\bot}
\newcommand\nef[1]{\todo[color=cyan]{#1}}
\def\extended{EXTENDED}
\newcommand{\myparagraph}[1]{\newline\noindent\textbf{#1}}
\newcommand{\myparagraphB}[1]{\noindent\textbf{#1}}
\newcommand*\externallink[1]{\href{{#1}}{\includegraphics[scale=0.7]{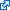}}}
\newcommand{\quantsep}{.\;}
\newcommand*\constraint{\sqsubseteq}
\newcommand*\constraintequal{=}
\newcommand{\constraintunion}{\sqcup}
\newenvironment{corners}{$\operatornamewithlimits{}_\llcorner^{\ulcorner}$}{$\operatornamewithlimits{}_\lrcorner^{\urcorner}$}
\newcommand{\precision}[1]{{\color{blue}#1}}
\newcommand{\mStrongPC}[1]{%
 \begin{tikzpicture}[baseline=(text.base)]%
   \node [draw,dashed,rounded corners,fill=red!20] (text) {$#1$};%
 \end{tikzpicture}}
\newcommand{\mTypesafePC}[1]{%
 \begin{tikzpicture}[baseline=(text.base)]%
   \node [draw,solid,rounded corners,fill=blue!20] (text) {$#1$};%
 \end{tikzpicture}}
\newcommand{\StrongPC}[1]{%
  \tikz[baseline=(text.base)]
    {\node [draw,dashed,rounded corners,fill=red!20] (text) {#1};}%
}
\newcommand{\TypesafePC}[1]{%
  \tikz[baseline=(text.base)]
    {\node [draw,solid,rounded corners,fill=blue!20] (text) {#1};}%
}
\newcommand{\mmybox}[1]{\tikzpicture[baseline=(text.base)]{\node [draw,rounded corners,fill=red!20] (text) {hihi};}}
\newcommand{\oC}{\rho}
\newcommand{\Var}{\mathcal{V}}
\newcommand{\Class}{\mathcal{C}}
\newcommand{\Method}{\mathcal{M}}
\newcommand{\Domain}{\mathcal{D}}
\newcommand{\Type}{\mathcal{T}}
\newcommand{\result}{\mathbf{r}}
\newcommand{\pCm}{\text{\u{p}}}
\newcommand{\qCm}{\text{\u{q}}}
\newcommand{\restricted}{\theta}
\newcommand{\deriv}{\vdash}
\newcommand{\free}{\mathit{free}}
\newcommand{\Vnull}{\mathit{null}}
\newcommand{\this}{\PL{this}}
\newcommand{\TInv}{\mathcal{I}}
\newcommand{\PL}[1]{\text{#1}}
\newcommand{\textPL}[1]{\textsl{#1}}
\newcommand{\Vector}[1]{\overrightarrow{#1}}
\newcommand{\pure}{\varepsilon}
\newcommand{\ppure}{\mathit{pure}}
\newcommand{\defined}{\stackrel{\Delta}{=}}
\newcommand{\safe}{\mathit{safe}}
\newcommand{\smin}{\typingX^{\dagger}}
\newcommand{\pre}[1]{{}^\downarrow #1}
\newcommand{\post}[1]{#1^\downarrow}
\newcommand{\Fversion}{\upsilon}
\newcommand{\typingassertion}{\Xi}
\newcommand{\translation}{\typingassertion}  
\newcommand{\assertS}{S_{\mathit{TA}}}
\newcommand{\Fmethod}{\Method}
\newcommand{\Ex}[1]{#1_{\mathit{Ex}}}
\newcommand{\Verifier}{\mathcal{V}}
\newcommand{\VEx}{\Ex{\Verifier}}
\newcommand{\filter}{\mathit{pr}}
\newcommand{\projection}{\filter}
\newcommand{\typingX}{\mathit{ty}}
\newcommand{\typingY}{\mathit{ty'}}
\newcommand{\StmtX}{S}
\newcommand{\ExprX}{e}
\newcommand{\programX}{\pi}
\newcommand{\ClassX}{\PL{C}}
\newcommand{\MethodX}{\PL{m}}
\newcommand{\objectX}{o}
\newcommand{\VarLX}{\PL{u}}
\newcommand{\VarLXVirtual}{\hat{\VarLX}}
\newcommand{\VarIX}{\PL{@x}}
\newcommand{\LVarX}{v}
\newcommand{\VarX}{\PL{x}}
\newcommand{\VarXVirtual}{\hat{\VarX}}
\newcommand{\ProgVirtual}[1]{\widehat{#1}}
\newcommand{\UnionTypeX}{T}
\newcommand{\BaseTypeX}{\mathbb{T}}
\newcommand{\proofX}{\psi}
\newcommand{\locX}{L}
\newcommand{\Ntypeerror}{\tau}
\newcommand{\Ndivergence}{t}
\newcommand{\Nfailure}{f}
\newcommand{\Npost}{p}
\newcommand{\AsrtX}{p}
\newcommand{\AsrtY}{q}
\newcommand{\LExpX}{l}
\newcommand{\TAsrtX}{\tau}
\newcommand{\TAsrtY}{\tau'}
\newcommand{\TAsrtKX}{\nu}
\newcommand{\TLitX}{\mu}
\newcommand{\typevarIvarX}{\llbracket \ClassX.\VarIX \rrbracket}
\newcommand{\HLs}{$\mathit{HL}_s\;$}
\newcommand{\HLd}{$\mathit{HL}_d\;$}
\def\version{EXTENDED}
\def\submission{FALSE}
\begin{document}

\title{Closing the Gap -- Formally Verifying Dynamically Typed Programs like Statically Typed Ones Using Hoare Logic\ifx \version\extended \newline -- Extended Version --\fi}


\ifx \submission\true
\author{-- omitted for submission --}
\institute{-- omitted for submission --}
\else
\author{Bj\"orn Engelmann \and Ernst-R\"udiger Olderog \and Nils Erik Flick}
\institute{University of Oldenburg, Germany \\ 
           \email{bjoern.engelmann@uni-oldenburg.de}
           \email{ernst.ruediger.olderog@informatik.uni-oldenburg.de}
           \email{flick@informatik.uni-oldenburg.de}}
\fi

\maketitle

\begin{abstract}
Dynamically typed object-oriented languages enable programmers to write elegant, reusable and extensible programs. However, with the current methodology for program verification, the absence of static type information creates significant overhead.
Our proposal is two-fold:

First, we propose a layer of abstraction hiding the complexity of dynamic typing when provided with sufficient type information. Since this essentially creates the illusion of verifying a statically-typed program, the effort required is equivalent to the statically-typed case.

Second, we show how the required type information can be efficiently derived for all type-safe programs by integrating a type inference algorithm into Hoare logic, yielding a semi-automatic procedure allowing the user to focus on those typing problems really requiring his attention. While applying type inference to dynamically typed programs is a well-established method by now, our approach complements conventional soft typing systems by offering formal proof as a third option besides modifying the program (static typing) and accepting the presence of runtime type errors (dynamic typing).
       


%
%
\end{abstract}

\todo{use consequently in the whole paper: $\text{x}$ - program variable,
      $\text{@v}$ - instance variable, $x$ - logical variable, $\text{this}$ = program var,
      $\text{null}$ = expression (constant), $\Vnull$ - value, $S$ for statements, $T$ for union types, $\mathbb{T}$ for base types}
\todo{type-safe instead of typesafe}
\todo{is there a way to have PDF viewers see the tree-like structure of the document (sections, subsections, etc.)?}
\todo{some papers have much nicer citation-links \cite{Gardner12programlogicJS}}
\todo{think about: which results should be theorems, which ones lemmas?}
\todo{there are more...}

\section{Introduction}
  \label{sec:introduction}
  \todo{what are the special requirements of ECOOP?}

Dynamically typed programming languages refrain from restricting their programs to ensure operations are only applied to suitable operands. While this allows experienced programmers to write more elegant, concise and reusable code, it has the obvious drawback that type errors may occur at runtime.

Recently, object-oriented dynamically typed languages like Python, Ruby and JavaScript are gaining popularity
also on the server-side (Ruby on Rails, node.js) and are used even for business- \cite{Nexedi} and safety-critical \cite{Frequentis} applications.

\todo{mention hybrid-typed languages}

Unfortunately, despite the growing need for correctness guarantees, the lack of type information causes a large overhead in formal methods like Hoare logic and severely decreases the effectiveness of automatic reasoning engines compared to the statically-typed setting (see Section~\ref{sec:overview:stat-dyn-hoare-logic}).

There are two ways to deal with this problem:

1) \textbf{Annotation}: Most contemporary approaches to verifying dynamically typed programs ask the user to manually supply the needed type information in loop invariants and method contracts \cite{Gardner12programlogicJS,QinEtAl2011,Swamy2013DijMonad,NguyenTH2013SoftContract}.
For larger programs, this induces significant overhead. We argue that manually supplying type information for all variables is not only tedious, but also often unnecessary, as most of this information could have been inferred automatically.

2) \textbf{Translation}: Obviously, translating the dynamically typed program into an equivalent statically typed version\footnote{After this translation, the static type system should be able to ensure the absence of type errors, unlike in the embeddings discussed in \cite{Harper2012}. Finding such an equivalent version is undecidable in general and hence requires manual effort (see Section~\ref{sec:overview:provide_type_info})} and then using a Hoare logic for statically-typed programs (like \cite{AptDeBoerOlderog2012,DeBoerPierik2003}) for verification is also possible. In such a translation process, type inference algorithms like \cite{JensenMollerThiemann2009,Furr2009StaticTypeInferenceForRuby} are usually of significant help. Note, however, that gradual typing \cite{SiekTaha2007,BiermanAbadi2014UTypeScript} it not useful in this context, as such Hoare logics require the the entire program to be well-typed prior to verification. Additionally, this approach removes any benefit of dynamic typing since it is equivalent to verifying a statically typed language with type inference.

We propose to get the best of both worlds by integrating an automatic type safety verifier with Hoare logic into a semi-automatic procedure and using the derived type information to reduce overhead and enable effective automated reasoning about dynamically typed programs just like with statically typed ones. In the context of soft typing \cite{Cartwright91softTyping}, our approach can also be understood as offering proofs of type safety as a third option besides rewriting the program (static typing) and runtime-checks (dynamic typing).

Concretely, in this paper we describe two components:

1) A layer of abstraction that, given suitable type information, abstracts from the complexities of dynamic typing and hence reduces the verification of dynamically typed programs to that of statically typed ones. This also works with partial type information on a per-expression basis (see Section~\ref{sec:overview:stat-dyn-hoare-logic}).

2) A construction for complementing a Hoare logic with an automatic type safety verifier, yielding a semi-automatic procedure for deriving type information with the following properties (see Section~\ref{sec:overview:semi-auto}):
\begin{itemize}
 \item Automation -- only typing problems beyond the reach of the automatic verifier require manual intervention.
 \item Completeness relative to the Hoare logic -- if the Hoare logic is complete, then type information can be derived for all typesafe programs (see Section~\ref{sec:deriv:extracting_type_information}).
 \item Bidirectional exchange of results -- automatically derived type information can be used in Hoare logic proofs and vice versa, proof results are used by the automatic verifier to increase precision.
\end{itemize}

Together, these two components form a novel verification system that makes the effort additionally required to verify a dynamically typed program proportional to the total complexity \nef{amount?} of hard typing problems in this program. Unlike in annotation-based-approaches, programs with only trivial typing problems require no additional effort and unlike in translation-based-approaches, all typesafe programs can be verified.
%

This paper constitutes our first step towards connecting the (relative) complete Hoare logics \cite{AptDeBoerOlderog2012,DeBoerPierik2003} and advanced reasoning engines \todo{cite Boogie? look up references of Soft-Contract-Verification paper} developed for statically-typed object-oriented languages with the advancing automatic type safety verifiers for dynamically typed languages \cite{NguyenTH2013SoftContract,Swamy2013DijMonad,Chugh2012NestedRef,JensenMollerThiemann2009}.
\ifx \version\extended In this extended version, proofs for all theorems and lemmas can be found in Appendix~\ref{app:proofs}. 

\myparagraphB{Notation}
$\Vector{p}$ is a sequence $p_1,...,p_n$ where $n$ is obvious from context or does not matter, $\{\Vector{p}\}$ the smallest set containing all its elements and $\Vector{a} \Vector{b}$ sequence concatenation. $\defined$ means ``is defined as'' and $\mathbb{N}_n \defined \{0,...,n\}$, $\mathbb{N}^1_n \defined \{1,...,n\}$.

\section{Overview / Motivation}
  \label{sec:overview}
  We will first discuss how correctness proofs can be simplified using sufficient type information and then how this information can be derived.

%
%

  \subsection{Static- vs. Dynamically-typed Hoare Logic}
  \label{sec:overview:stat-dyn-hoare-logic}
  Apart from the additional need to establish type safety, there are other differences between Hoare logic for dynamically typed- and statically typed languages (\HLd and \HLs). The latter (like \cite{AptDeBoerOlderog2012,DeBoerPierik2003}), usually share a type system between programming- and assertion language: the assertion $\PL{x} > 8$ denotes the set of states where the value of a numeric program variable $\PL{x}$ is larger than 8.
In \HLd (like \cite{Gardner12programlogicJS}) however, as types are not statically known, all variables are of type $\mathbb{O}$ (object). The assertion $\PL{x} > 8$ is hence meaningless as $>$ is not defined for type $\mathbb{O}$. In this setting, a similar set of states can be denoted by the assertion\footnote{the precise meaning of $\mathbb{N}(\PL{x},i)$ will be explained in Subsection~\ref{sec:layer:object_mapping}.} $\exists i \quantsep \mathbb{N}(\PL{x},i) \wedge i > 8$ which can be automatically derived from $\PL{x} > 8$ given sufficient type information (the fact that the object referenced by \PL{x} always represents a number).

Furthermore, \HLs usually include side-effect-free (pure) program expressions ($\ExprX$) into the assertion language, allowing efficient reasoning using proof rules like
\[ \{\AsrtY[\VarLX := \ExprX]\} \VarLX := \ExprX \{\AsrtY\} \]
Here, $q[\VarLX := e]$ denotes the substitution of all occurrences of a variable $\VarLX$ by $e$ in the assertion $q$. This rule allows directly deducing weakest preconditions over assignments like $\{\PL{x} + 5 > 8\} \PL{x} := \PL{x} + 5 \{\PL{x} > 8\} \;\; (1)$ by letting the expression $e$ traverse the boundary between program and logic. In \HLd, this is not possible since program expressions could have side-effects. While a subset of side-effect-free methods can be defined, identifying such pure expressions requires type information. Without it, establishing a property equivalent to $(1)$ requires $\ge 6$ rule applications.

This observation is given significance by the fact that usually most expressions only involve immutable data types like numbers and strings. Regarding them as side-effecting operations on general object-structures not only complicates proofs, but also significantly decreases the effectiveness of automated reasoning engines \todo{cite some examples}. For instance, assertions can often be efficiently established by SMT solvers over Presburger arithmetic while no similar decision procedure exists for arbitrary operations over general object-structures.
\todo{paragraph can be made simpler, but loses elegance}

Section~\ref{sec:layer} will show how type information can be used to counter these problems and create the illusion of proving a statically typed program.



  \subsection{Providing Type Information}
  \label{sec:overview:provide_type_info}

Sufficient type information for dynamically typed programs is uncomputable in general (Section~\ref{sec:deriv:type_inference_and_type_safety}). However, a number of good approximations exist \cite{JensenMollerThiemann2009,Furr2009StaticTypeInferenceForRuby} \todo{cite some more TI algos} that we will refer to as \emph{automatic type safety verifiers}.

It is known \todo{find citation} that many dynamically typed programs only occasionally diverge from what would also be possible in static typing disciplines\footnote{Advanced dynamic features like mixins, traits, method update and dynamic class hierarchies increase the complexity of type inference. However, in this paper we aim to study the problem of dynamic typing in isolation and leave them as future work.} and consequently, that 
the output of such algorithms is usually sufficient for typing most of their subexpressions \cite[Section~5]{JensenMollerThiemann2009}\cite[Section~6]{Furr2009StaticTypeInferenceForRuby}.

%
\todo{cite Gradual Typing for First-Class Classes in footnote?}
If the entire program can be typed by a sound automatic verifier, then \HLs could be applied. However, the whole point of dynamic typing is the possibility to go beyond the limits of such automatic procedures (type systems). Approaches to verifying these languages
thus must also be able to operate under less ideal circumstances. The following example will illustrate this point.

  \subsection{The Evaluator Example}
  \label{sec:overview:ev-example}
  \todo{create figure of data-structure}
\begin{figure}
\verb|class Evaluator {|

\verb|  method parse(str) { ... }|

\verb||

\verb|  method calc(env, tree = @tree) {|

\verb|    if tree[0] = VALUE then tree[1]|

\verb|    elseif tree[0] = VAR then env[tree[1]]|

\verb|    elseif tree[0] = OP then|

\verb|      if tree[1] = ADD then calc(env, tree[2]) + calc(env, tree[3])|

\verb|      elseif ...|

\verb|      else nil|

\verb|      fi|

\verb|    else nil|

\verb|    fi|

\verb|  }|

\verb|}|

\verb|new Evaluator().parse(input).calc(ENV)|
  \caption{\label{fig:ev-example} Relevant part of the evaluator example source code}
\end{figure}

Figure~\ref{fig:ev-example} depicts a dynamically typed program evaluating arithmetic expressions. While crafted to provide a hard typing problem, its use of ad-hoc data structures is not uncommon in Ruby, Python or Javascript.

The class \verb|Evaluator| has two methods \verb|parse()| and \verb|calc()|. The former parses a string and stores the resulting parse tree in the instance variable \verb|@tree|, while the latter evaluates a given parse tree (defaulting to \verb|@tree|) over a given environment (a mapping from variable names (strings) to integers).

The example is hard to type because the parse trees are represented as ad-hoc constructions of nested lists. Numeric constants \verb|VALUE|, \verb|VAR| and \verb|OP| in the first element distinguish value-, variable- and operation nodes. The types of the remaining list elements depend on these node types: the second element is numeric (the value) for value-nodes, a string (the variable name to be looked up in the environment) for var-nodes and numeric (representing the operation to be performed) for op-nodes. Only op-nodes use nesting: further list elements are sub-parse-trees that are to be recursively evaluated to operands.

Typing this example requires deducing precise types for heterogeneous lists from propositions (like \verb|tree[0] = VALUE|) about their first element. To the best of our knowledge there is no automatic procedure able to establish such implications. Also note that the typing problem can be made even harder: allowing an arbitrary number of operands in op-nodes, returning strings instead of \verb|null|, etc. This example will be used to demonstrate our technique.

  \subsection{Semi-Automation}
  \label{sec:overview:semi-auto}
%
\begin{figure}
  {\def\svgwidth{350px}
    \input{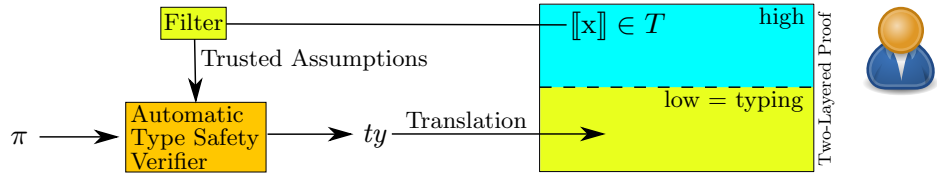}
  }
\caption{\label{fig:concept} Overview of the concept}
\end{figure}
\todo{use less shiny colors}
In the concept depicted in Figure~\ref{fig:concept}, the correctness proof is split into two ``layers'' (see Section~\ref{sec:auto:two_layered_proofs}). While the user (supported by a theorem prover) derives his proof in the higher layer, the lower layer contains type information and is created and modified solely by the automatic type safety verifier. For this purpose, the typings ($\typingX$) derived for the program $\programX$ by the verifier are translated into proofs (see Section~\ref{sec:auto:typings2proofs}). While the information contained in this lower layer proof is already useful for supporting the user's higher-layer proof (see Section~\ref{sec:layer}), the user may at any time decide to refine it by deriving more precise type information in the higher layer. This information is filtered to make it interpretable for the verifier and then supplied as trusted assumptions to refine the lower-layer type information (see Sections~\ref{sec:auto:type_safety_verifier}, \ref{sec:auto:two_layered_proofs}).


Note that deriving type information and using the layer of abstraction is not a strict 2-step process. The latter requires the former only on a per-expression basis, allowing an interleaving of steps. Concretely, the layer of abstraction applies to all expressions proven type-safe (see Section~\ref{sec:layer}). Expressions with open typing problems may be included at any time by proving them type-safe. This interleaving is possible as our refinements are monotonic (see Section~\ref{sec:auto:two_layered_proofs}).

%
%


\section{Setting}

  \subsection{Model Languages: Dyn and Stat}
    \label{sec:setting:language}


To explain our methodology in a setting facilitating formal proof we introduce a pair of minimalistic programming languages that differ only in the fact that one is dynamically typed (\textbf{dyn}) while the other uses a static type system with type inference (\textbf{stat}). Like their real-world siblings, the two are imperative, class-based object-oriented languages including inheritance, method renaming, dynamic dispatch and constructors. However, they do not support advanced dynamic features like a dynamic class hierarchy, method update or eval().
\begin{figure}
\underline{Syntax of \textbf{dyn}:}

$\mathit{Prog_d} \ni \pi ::= \Vector{\mathit{class}} \PL{ } \StmtX$ \hfill $\PL{u} \in \Var_L, \VarIX \in \Var_I, \PL{C} \in \Class, \PL{m} \in \Method$

$\mathit{Class_d} \ni \mathit{class} ::= \PL{class C $<$ C }\{ \Vector{\mathit{meth}} \}$

$\mathit{Meth_d} \ni \mathit{meth} ::= \PL{method m}(\Vector{\PL{u}}) \{ \StmtX \} \mid \PL{rename m m}$, \hfill $\mathit{Stmt_d} \ni \StmtX ::= \StmtX;\StmtX \mid e$

$\mathit{Expr_d} \ni \ExprX ::= \PL{null} \mid \PL{u} \mid \VarIX \mid \this{} \mid \ExprX == \ExprX \mid \ExprX \PL{ is\_a? \ClassX} \mid \ExprX.\PL{m}(\Vector{\ExprX}) \mid \PL{new }\ClassX(\Vector{\ExprX})$

\hspace*{2cm}$\mid \PL{u} := \ExprX \mid \VarIX := \ExprX \mid \PL{if } \ExprX \PL{ then } \StmtX \PL{ else } \StmtX \PL{ fi}\mid \PL{while } \ExprX \PL{ do } \StmtX \PL{ od}$

\underline{Syntax of \textbf{stat}:} -- coincides with \textbf{dyn}, except for

$\mathit{Stmt_s} \ni \StmtX ::= \StmtX;\StmtX \mid \ExprX_{\pure} \mid \PL{u} := \ExprX_{\pure}.\PL{m}(\Vector{\ExprX_{\pure}}) \mid \PL{u} := \PL{new C}(\Vector{\ExprX_{\pure}}) \mid \PL{u} := \ExprX_{\pure} \mid  \VarIX := \ExprX_{\pure}$

\hspace*{2cm}$\mid \PL{if } \ExprX_{\pure} \PL{ then } \StmtX \PL{ else } \StmtX \PL{ fi}\mid \PL{while } \ExprX_{\pure} \PL{ do } \StmtX \PL{ od}$

$\mathit{Expr_s} \ni \ExprX_{\pure} ::= \PL{null} \mid \PL{u} \mid \VarIX \mid \this{} \mid \ExprX_{\pure} == \ExprX_{\pure}$ \hfill $c_{\pure} \in \mathit{Cnst_s}, \mathit{op}(\Vector{\ExprX_{\pure}}) \in \mathit{Op_s}$

\hspace*{2cm}$\mid \ExprX_{\pure} \PL{ is\_a? C} \mid c_{\pure} \mid \mathit{op}(\Vector{\ExprX_{\pure}}) \mid \PL{if } \ExprX_{\pure} \PL{ then } \ExprX_{\pure} \PL{ else } \ExprX_{\pure} \PL{ fi}$

\begin{minipage}{0.55\textwidth}
\underline{Syntactic sugar in \textbf{dyn}:} 

$\ExprX_1 \oplus \ExprX_2 \defined \ExprX_1.\MethodX_{\oplus}(\ExprX_2)$

$\PL{if } \ExprX \PL{ then } \StmtX \PL{ fi} \defined \PL{if } \ExprX \PL{ then } \StmtX \PL{ else null fi}$

$\mathit{false} \defined \PL{new bool(null)}$, $\mathit{true} \defined \mathit{false}.\PL{not()}$

$0 \equiv \PL{new num(null)}$, $n \defined \PL{$(n-1)$.succ()}$

%
\end{minipage}
\begin{minipage}{0.45\textwidth}
\underline{Basic data types in \textbf{stat}:}

$\mathit{true}, \mathit{false} \in \mathit{Cnst_s}$, $\neg: \mathbb{B} \mapsto \mathbb{B} \in \mathit{Op_s}$

$\wedge, \vee, \rightarrow: \mathbb{B} \times \mathbb{B} \mapsto \mathbb{B}  \in \mathit{Op_s}$

$0,1,... \in \mathit{Cnst_s}$, $=: \mathbb{N} \times \mathbb{N} \mapsto \mathbb{B} \in \mathit{Op_s}$

$+,*,div: \mathbb{N} \times \mathbb{N} \mapsto \mathbb{N} \in \mathit{Op_s}$

%
%

\end{minipage}
%
%

\caption{\label{fig:lang_syntax}Syntax of \textbf{dyn} and \textbf{stat}}
\end{figure}
\todo{$e==e$ and $e \PL{ is\_a? C}$ were added. Also add to semantics, etc.}
\todo{extend is\_a? - syntax to allow checking for union types?}

\myparagraphB{Syntax:}
The syntax of both \textbf{dyn} and \textbf{stat} is depicted in Figure~\ref{fig:lang_syntax}.
In \textbf{dyn}, method bodies consist of statements ($\StmtX$) which in contrast to expressions ($\ExprX$)
can contain sequential composition.
Expressions are composed of \textPL{null}, the only constant, local- and instance variables (prefixed with $\PL{@}$), the self-reference \textPL{this}, operators for object identity and dynamic type checks,
method- and constructor calls, assignments, conditionals and while loops. Note that equality ($\PL{=}$) is desugared to a (class-specific) method call, while object identity ($\PL{==}$) is a build-in operation yielding $\mathit{true}$ iff the two expressions refer to the same object (We stipulate $\PL{null == null}$ yields $\mathit{true}$).

Each class except the predefined class \textPL{object} must specify a parent class whose methods are inherited. The inheritance relation must be acyclic. Every class thus transitively inherits from \textPL{object}. Inherited methods may be overwritten or renamed (using \textPL{rename}). Like in actual dynamically typed languages, inheritance is mere code reuse and can be removed using an automatic expansion step \cite{PalsbergS91TypeInfernceOO}. Furthermore, we will assume this step to be completed and not concern ourselves any further with inheritance or renaming.


\myparagraphB{Semantics:}
\label{sec:dyn_semantics}
%
%
Both \textbf{dyn} and \textbf{stat} programs consist of a main statement $\StmtX$ and sets of classes $\Class$, methods $\Method$ and variables $\Var = \Var_L \uplus \Var_I$ where $\Var_L$ and $\Var_I$ are the sets of local- and instance variables respectively. While each class $\ClassX \in \Class$ has a subset of method declarations $\Method_C \subseteq \Method$ and instance variables $\Var_C \subseteq \Var_I$, every method $\ClassX.\MethodX \in \Method$ has a subset of local variables $\Var_{\ClassX.\MethodX} \subseteq \Var_L$ used in its method body $S_{\ClassX.\MethodX}$.
$\Var_S = \{this, \result\} \subset \Var_L$ is a set of special variables. While $this$ references the current object and is not allowed to be assigned to in programs, $\result$ holds the result of the last evaluated expression and cannot be used in programs.



\textbf{Dyn}'s value domain is the set of all objects $\Domain_{d} = \Domain_{\mathbb{O}}$ and its type system is the lattice of union types represented as sets of class names $\{\ClassX_1,...,\ClassX_n\} \in \Type_d = 2^{\Class}$ with the subset-ordering $\subseteq$ (see Figure~\ref{fig:dyn_stat_typesystem}). The \textPL{null} value is contained in every such type. \textbf{Stat} on the contrary distinguishes basic data types $\Type_s = \{\mathbb{O}, \mathbb{N}, \mathbb{B}, \mathbb{S}, \mathbb{L}, \mathbb{M}, ...\}$ and its value domain $\Domain_{s} \cong \biguplus_{\BaseTypeX}\Domain_{\BaseTypeX}$
includes objects, numbers, booleans, strings, lists and finite maps. \todo{how do statically typed maps handle the event of an unknown key?}\todo{how do statically types lists handle index out of bounds?}
We omit definitions of states, state update etc. as they are standard. To keep track of instance-class relationships we use class references and for every class $\ClassX$ introduce a distinct object $\oC_{\ClassX}$ as well as a special instance variable $\PL{@}\mathbf{c}$ such that $\objectX.\PL{@}\mathbf{c} = \oC_{\ClassX}$ iff $\objectX$ is an instance of class $\ClassX$. Using $\PL{@}\mathbf{c}$ in programs is not permitted.
\begin{figure}
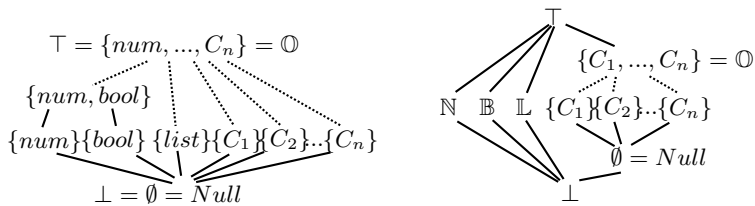

 \input{images/img-typesystem-dyn}
 \input{images/img-typesystem-stat}
\caption{\label{fig:dyn_stat_typesystem}Type lattices of \textbf{dyn} (left) and \textbf{stat} (right)}
\end{figure}
\myparagraph{Comparing Dyn with Stat:}
\textbf{Dyn} is a pure object-oriented language (objects are the only values) while \textbf{stat} has basic data types. However, both provide the same constants and pure (i.e. side-effect-free) operations on them. Dyn desugars them to constructor and method calls (see Figure~\ref{fig:lang_syntax}), while \textbf{stat} (like usual in statically typed languages) provides them build-in ($c_{\pure}$ and $\mathit{op}(\Vector{\ExprX_{\pure}})$ in Figure~\ref{fig:lang_syntax}).

Also, \textbf{stat} expressions are pure. Side-effects are only allowed in statements, which must only have pure subexpressions. This is not a restriction, as every \textbf{dyn}-expression can be transformed into a sequence of \textbf{stat} statements by recursively (and in the order of evaluation) replacing subexpressions $\ExprX$ by fresh local variables $\VarLX$ and prepending the assignment $\VarLX := \ExprX;$.


Every \textbf{stat} program is also a \textbf{dyn} program that evaluates to (an object-oriented version of) the same result. The only reason that the opposite direction does not hold is the language restriction imposed by \textbf{stat}'s static type system.

\myparagraphB{Type Errors:}
Contrary to \textbf{stat}, which rejects programs deemed unsafe at compile time, \textbf{dyn} allows every syntactically correct program to be executed and raises type errors at runtime when
\begin{itemize}
 \item a method call is not supported by its receiver (in this arity) or
 \item a condition of a conditional or while loop is not boolean
\end{itemize}
While ``message not understood''-errors are fundamentally linked to type-checking in class-based OO-languages, dynamically typed languages often allow conditions to be of arbitrary type. Nevertheless, the second error condition models a common error class where a built-in operation supports a fixed set of types.

Many dynamically typed languages raise type errors when accessing variables prior to assignment. We will leave this as future work and consider all local (instance) variables to be initialized to $\Vnull$ prior to method executions (on instantiation). Also, type errors are often treated as exceptions, allowing interception and handling. For simplicity, we will consider them as fatal.

  \subsection{Hoare Logic}
    \label{sec:setting:hoare-logic}
    \label{sec:HL}  
    
The presentation of \textbf{dyn} and \textbf{stat}'s program logics closely follows \cite{AptDeBoerOlderog2012,AptDeBoerOlderogBook2009}. We start by introducing the assertion language (Figure~\ref{fig:dyn_ass_lang}). Essentially, it is weak second order logic, extended with the same constants $c_{\pure}$, operations $op(\Vector{\LExpX})$ and types used in \textbf{stat}. It will be used to reason about both \textbf{dyn} and \textbf{stat}, however.

Assertions contain typed logical expressions ($\LExpX$). Such expressions consist of typed logical variables, local/instance program variables $\VarLX$/$\LExpX.\VarIX$ (of type $\mathbb{O}$ in \textbf{dyn} and of some type $\BaseTypeX \in \Type_s$ in \textbf{stat} / same, with $\LExpX$ being of type $\mathbb{O}$) including \textPL{this}, typed constants and typed operations. Contrary to program expressions, logical expressions can access instance variables of objects other than \textPL{this}.

Logical expressions may only occur as parts of well-typed equations. Following \cite{DeBoerPierik2003}, \todo{in this paper, undefined operations yield $\bottom$. Find the other one!} undefined operations like dereferencing a $\Vnull$ value or accessing a sequence with an index out of bounds ($l[n]$ with $n \ge |l|$) yield a $\Vnull$ value and equality is non-strict with respect to such values ($\Vnull = \Vnull$ is $\mathit{true}$) to ensure a two-valued logic. Assertions are boolean combinations of such equations allowing quantification over finite sequences of elements of basic types.

We also introduce the following abbreviation for making reasoning about runtime types more convenient:

\vspace{0.1cm}
\centerline{$\mathit{Asrt} \ni \AsrtX,\AsrtY ::= \llbracket l \rrbracket \in T, \;\;\;\; T \in \Type$}
\centerline{$\llbracket l \rrbracket \in \{\PL{C}_1,...,\PL{C}_n\} \;\; \defined \;\; l \not= \Vnull \rightarrow [ l.\PL{@}\mathbf{c} = \oC_{\PL{C}_1} \vee ... \vee l.\PL{@}\mathbf{c} = \oC_{\PL{C}_n} ]$}
\vspace{0.1cm}

The reader may convince himself/herself that the following implications hold:

\vspace{0.1cm}
\begin{tabular}{llll}
\hspace{0.5cm}$\llbracket l \rrbracket \in \UnionTypeX_1 \wedge \llbracket l \rrbracket \in \UnionTypeX_2$ & $\rightarrow \llbracket l \rrbracket \in \UnionTypeX_1 \sqcap \UnionTypeX_2$ \hspace{0.5cm} & $\llbracket l \rrbracket \not\in \UnionTypeX$ & $\rightarrow \llbracket l \rrbracket \in \top \setminus \UnionTypeX$ \\
\hspace{0.5cm}$\llbracket l \rrbracket \in \UnionTypeX_1 \vee \llbracket l \rrbracket \in \UnionTypeX_2$ & $\rightarrow \llbracket l \rrbracket \in \UnionTypeX_1 \sqcup \UnionTypeX_2$ & $l_1 = l_2$ & $\rightarrow \exists \UnionTypeX \quantsep \llbracket l_1 \rrbracket \in \UnionTypeX \wedge \llbracket l_2 \rrbracket \in \UnionTypeX$
\end{tabular}
\vspace{0.1cm}

%
%
%
%
%
%
%
%
Selected differences between the Hoare-style axiomatic semantics for \textbf{dyn} and \textbf{stat} are contrasted in Figure~\ref{fig:hoare_logic}. While the semantics for \textbf{stat} are standard\footnote{they closely follow other Hoare logics for statically typed languages \cite{AptDeBoerOlderog2012,DeBoerPierik2003,AptDeBoerOlderogBook2009}}, the rules for \textbf{dyn} were modeled after \cite{Gardner12programlogicJS}. Omitted rules are listed in Appendix~\ref{app:hoare_logic}. In Hoare triples $\{p\} S \{q\}$, the special variable $\result$ is only allowed in the postcondition $q$ and denotes the return value of $S$. The rules will be analyzed in the next section.

\todo{typical assertions should be $p,q$}
\begin{figure}[p]
$\AsrtX,\AsrtY \in \mathit{Asrt} ::= \; \LExpX = \LExpX \mid \neg \AsrtX \mid \AsrtX \wedge \AsrtX \mid \exists v : \BaseTypeX^* \quantsep \AsrtX$ \hspace{0.5cm} $\BaseTypeX \in \Type_s$

$\LExpX \in \mathit{LExp} ::= \LVarX \mid \VarLX \mid \LExpX.\VarIX \mid \Vnull \mid \this{} \mid$ if $\LExpX$ then $\LExpX$ else $\LExpX$ fi $\mid \LExpX = \LExpX \mid |\LExpX| \mid \LExpX[\LExpX] \mid c_{\pure} \mid op(\Vector{\LExpX})$

with the usual abbreviations: $\AsrtX \vee \AsrtY \defined \neg(\neg \AsrtX \wedge \neg \AsrtY)$, $\AsrtX \rightarrow \AsrtY \defined \neg \AsrtX \vee \AsrtY$, $\AsrtX \leftrightarrow \AsrtY \defined \AsrtX \rightarrow \AsrtY \wedge \AsrtY \rightarrow \AsrtX$, $\exists \LVarX : \BaseTypeX \quantsep \AsrtX \defined \exists \LVarX' : \BaseTypeX^* \quantsep |\LVarX'| = 1 \wedge \AsrtX[\LVarX/\LVarX'[0]]$, $\forall \LVarX : \BaseTypeX \quantsep \AsrtX \defined \neg \exists \LVarX : \BaseTypeX \quantsep \neg \AsrtX$

\caption{\label{fig:dyn_ass_lang} Syntax of the assertion language}
\end{figure}
\begin{figure}[p]
\begin{center}
  Hoare logic rules for
\end{center}
\begin{minipage}{0.5\textwidth}
  \begin{center}
   \textbf{dyn}
  \end{center}
\end{minipage}
\begin{minipage}{0.5\textwidth}
  \begin{center}
   \textbf{stat}
  \end{center}
\end{minipage}

\vspace{0.2cm}
\noindent RULE: Assignment (ASGN)
\vspace{0.2cm}

\begin{minipage}{0.5\textwidth}
  \begin{center}
  \AxiomC{$\{p\} e \{q[\VarLX := \result] \}$}
  \UnaryInfC{$\{p\} \VarLX := e \{q\}$}
  \DisplayProof
  \end{center}
\end{minipage}
\begin{minipage}{0.5\textwidth}
  \begin{center}
  \AxiomC{$\{p[\VarLX := e_{\pure}]\} \VarLX := e_{\pure} \{p\}$}
  \DisplayProof
  \end{center}
\end{minipage}

\vspace{0.2cm}
\noindent RULE: Conditional (COND) (\TypesafePC{type-safe} partial correctness)
\vspace{0.2cm}

\begin{minipage}{0.5\textwidth}
  \begin{center}
  \AxiomC{$\begin{matrix}
	    \{p\} e \{r \wedge \mathit{bool\_test} \} \\
	    \{r \wedge b\} S_1 \{q\} \\
	    \{r \wedge \neg b\} S_2 \{q\}
	  \end{matrix}$}
  \UnaryInfC{$\{p\}$ if $e$ then $S_1$ else $S_2$ fi $\{q\}$}
  \DisplayProof
  \end{center}
\end{minipage}
\begin{minipage}{0.5\textwidth}
  \begin{center}
  \AxiomC{$\{p \wedge b_{\pure}\} S_1 \{q\}$}
  \AxiomC{$\{p \wedge \neg b_{\pure}\} S_2 \{q\}$}
  \BinaryInfC{$\{p\}$ if $b_{\pure}$ then $S_1$ else $S_2$ fi $\{q\}$}
  \DisplayProof
  \end{center}
\end{minipage}
\hspace{0.1cm}
where $b$ is a predicate and $\mathit{bool\_test} \defined \mTypesafePC{\llbracket \result \rrbracket \in \{bool\}} \wedge \mathbb{B}(\result,b)$
\newline
\hspace{0.2cm}
\noindent RULE: Method Call (METH)
\vspace{0.2cm}

\begin{minipage}{0.5\textwidth}
  \begin{center}
  \AxiomC{$\begin{matrix}
	    \{p_i\} e_i \{p_{i+1}[\PL{u}_i := \result] \}\text{ for }i \in \mathbb{N}_n \\
	    \{p_{n+1}\} \PL{u}_0.\PL{m}(\PL{u}_1,...,\PL{u}_n) \{q\}
	  \end{matrix}$}
	
  \UnaryInfC{$\{p_0\} e_0.\PL{m}(e_1,...,e_n) \{q\}$}
  \DisplayProof
  \end{center}
\end{minipage}
\begin{minipage}{0.5\textwidth}
  \begin{center}
  \AxiomC{$\{p\} \VarLX := \VarLX_0.\PL{m}(\VarLX_1,...,\VarLX_n) \{q\}$}
  \UnaryInfC{$\begin{matrix}
                \{p[\VarLX_0,...,\VarLX_n := e_{\pure 0},...,e_{\pure n}]\}\text{\hspace{2cm}} \\
                \text{\hspace{2cm}}\VarLX := e_{\pure 0}.\PL{m}(e_{\pure 1},...,e_{\pure n}) \{q\}
              \end{matrix}$}
  \DisplayProof
  \end{center}
\end{minipage}

\vspace{0.2cm}
where $\PL{u}_i \in \Var_L \text{ fresh}, \PL{u}_i \not\in var(e_j) \cup change(e_j)$ for all $i,j \in \mathbb{N}_n$.

\vspace{0.2cm}
\noindent RULE: Recursion (REC) (\textbf{dyn} and \textbf{stat}) (\TypesafePC{type-safe} partial correctness)

\begin{center}
\AxiomC{$\begin{matrix}
          A \vdash \{p\} S \{q\}, \\
          A \vdash \{p_i\} \PL{begin local this}, \Vector{\PL{u}_i} := \PL{v}'_i,\Vector{\PL{v}_i}; S_i \text{ end} \{q_i\}, i \in \mathbb{N}^1_n \\
          \mTypesafePC{p_i \rightarrow \llbracket \PL{v}'_i \rrbracket \in \{\PL{C}_i\}, i \in \mathbb{N}^1_n}
         \end{matrix}$}
       
\UnaryInfC{$\{p\} S \{q\}$}
\DisplayProof
\end{center}
where method $\PL{m}_i(\Vector{\PL{u}_i}) \{ S_i \} \in \Method_{\PL{C}_i}$, $A \equiv \{p_1\} \PL{v}'_1.\PL{m}_1(\Vector{\PL{v}_1}) \{q_1\}, ..., \{p_n\} \PL{v}'_n.\PL{m}_n(\Vector{\PL{v}_n}) \{q_n\}$.

%

\caption{\label{fig:hoare_logic} Comparison of dynamically typed and statically typed Hoare logic rules}
\end{figure}

  

\section{Layer of Abstraction}
  \label{sec:layer}
  Let us compare the proof rules given in Figure~\ref{fig:hoare_logic}. Obviously, the \textbf{dyn} rules are more complicated than their \textbf{stat} counterparts. Analyzing their differences, one can identify three core reasons why reasoning about dynamically typed programs is more complex than reasoning about statically typed ones.

\myparagraphB{1. Type safety:}
In Figure~\ref{fig:hoare_logic}, the parts ensuring type safety are \TypesafePC{marked}. Such \emph{type safety preconditions} are unnecessary in statically typed languages.

\myparagraphB{2. Mapping objects to values:}
Hoare logic for dynamically typed languages often uses predicates to map between program objects and logical values. For instance, the COND rule has to use the predicate $\mathbb{B}()$ to establish a correspondence between the program expressions $\ExprX$ and the logical expression $b$ of type $\mathbb{B}$. This additional layer of indirection not only reduces readability, but also hinders substitutions for pure expressions (see next paragraph).

\myparagraphB{3. Side-effecting expressions:}
In the \textbf{stat}-rules ASGN and COND, pure program expressions $e_{\pure}$ and $b_{\pure}$ are directly used in logical assertions. Here, the clever design choice of a shared type system pays off. Unfortunately, dynamic typing forces us to relinquish this benefit, as the types of expressions are not statically known and impure expressions are ill-suited for logical reasoning. Observe also how \textbf{dyn}'s METH rule models the evaluation order using a sequence of intermediate predicates $p_i$, which would not be necessary for pure expressions. However, since \textbf{dyn} treats operations as method calls, the METH rule needs to be applied even for pure operations like $+, <, \wedge,$ etc, making properties of assignments and conditionals even more tedious to derive.

The following sections will explain how the layer of abstraction mitigates these issues.
  
  \subsection{Type Safety Preconditions}
  \label{sec:layer:decomposition}
  Like already mentioned, the fact that type errors are runtime events in dynamically typed languages gives rise to the following notion of correctness:
\begin{definition}[Type-safe partial correctness]
$\{p\} \StmtX \{q\}$ holds in the sense of \emph{type-safe partial correctness} (written $\models_{\Ntypeerror \Npost} \{p\} \StmtX \{q\}$) if every non-diverging, non-failing computation starting in a state satisfying $p$ does not abort with a type error, but ends in a state satisfying $q$.
\end{definition}
The preconditions particular to proof rules for type-safe partial correctness are called \emph{type safety preconditions}. Being orthogonal to preventing divergence ($\Ndivergence$) and failures ($\Nfailure$) (calling a method on a null value) as well as ensuring the post-condition ($\Npost$), these correctness notions may be freely combined. Total correctness would hence be denoted $\Ntypeerror \Ndivergence \Nfailure \Npost$. In the proof rules given in Figure~\ref{fig:hoare_logic} (and Appendix~\ref{app:hoare_logic}), $\Ntypeerror$-preconditions are \TypesafePC{marked}.

%

In \HLs, type-safety preconditions are unnecessary. Regarding such preconditions, correctness proofs in statically-typed languages resemble those in dynamically typed languages for type-unsafe correctness notions. Omitting these preconditions hence is a first step in proving dynamically typed programs like statically-typed ones. This can be achieved by treating type safety issues seperately from other correctness issues.
\begin{definition}[Decomposition]
\label{def:decomposition}
The following rule is added to the proof system for a type-safe notion of correctness ($\tau X$):
\begin{floatingfigure}[l]{3cm}
\AxiomC{$\begin{matrix}
          \vdash_{X}\{p\} \StmtX \{q\} \\
          \vdash_{\Ntypeerror \Npost}\{p\} \StmtX \{\mathit{true}\}
         \end{matrix}$}
       
\UnaryInfC{$\{p\} \StmtX \{q\}$}
\DisplayProof
\end{floatingfigure}
%
\noindent where $\vdash_{X}$ refers to the corresponding type-unsafe variant of the proof system while $\vdash_{\Ntypeerror \Npost}$ always refers to the proof system for type-safe partial correctness.
\end{definition}
\vspace{0.2cm}

Correctness of the decomposition rule follows directly from the semantic definition for type-safe partial correctness. Intuitively, it states that whenever $\models_{\Ntypeerror \Npost} \{p\} \StmtX \{\mathit{true}\}$ and the precondition $p$ have been established for some statement $\StmtX$, we can omit type safety preconditions when reasoning about $\StmtX$, although our program is dynamically-typed.

  \subsection{Mapping Objects to Values}
    \label{sec:layer:object_mapping}
    Mapping predicates are a further peculiarity of \HLd. However, when the types of all used variables are known, those predicates can be generated automatically. We will now provide a ``virtual'' variable $\hat{\PL{u}}$ of the corresponding base type for each object variable $\PL{u}$ that can be safely mapped.

First, a subset of ``pure'' (i.e. immutable) classes $\Class_{\pure} \subseteq \Class$ along with a function $\Psi$ mapping classes from $\Class_{\pure}$ to corresponding base types $\BaseTypeX \in \Type_s$ of the assertion language must be defined. For \textbf{dyn}, this mapping is

\vspace{0.1cm}
$\Psi(num) = \mathbb{N}$, $\Psi(bool) = \mathbb{B}$, $\Psi(list) = \mathbb{L}$, ...
\vspace{0.1cm}

The mapping can be extended to union types $\UnionTypeX \in \Type_d$ by defining

\vspace{0.1cm}
$\Psi(\{\}) = \mathit{Null}$, $\Psi(\{\ClassX\}) = \Psi(\ClassX)$ for $\ClassX \in \Class_{\pure}$ and $\Psi(\UnionTypeX) = \mathbb{O}$ otherwise.
\vspace{0.1cm}

For each type $\BaseTypeX \in \mathcal{T}_s$, there is usually already a mapping predicate $\BaseTypeX(o,v): \mathbb{O} \times \Domain_\BaseTypeX \mapsto \mathbb{B}$ for mapping objects to values
as well as a safety predicate $\safe_\BaseTypeX(o): \mathbb{O} \mapsto \mathbb{B}$ defining under what condition this mapping is safe. For $\mathbb{N}$ these are\footnote{Expressing them using quantification over sequences instead of recursion is possible, but less readable.}

$\mathbb{N}(o,n) \defined \mathit{safe}_{\mathbb{N}}(o) \rightarrow (o.\PL{@pred} = \PL{null} \rightarrow n = 0 \; \wedge$

\hspace{3.35cm} $o.\PL{@pred} \not= \PL{null} \rightarrow \mathbb{N}(o.\PL{@pred}, n - 1))$ \hspace{0.5cm} and

$\mathit{safe}_{\mathbb{N}}(o) \defined o \not= null \wedge \llbracket o \rrbracket \in \{num\}$

We then introduce a new assertion language $\mathit{Asrt}_\Upsilon$ allowing the use of automatically mapped virtual variables $\VarXVirtual$. Its semantics is defined in terms of a mapping $\Upsilon: \mathit{Asrt}_\Upsilon \mapsto \mathit{Asrt}$ to the old assertion language.
\begin{definition}[Automatic Variable Mapping]
{
\renewcommand{\PL}[1]{\emph{#1}} \todo{check if this still looks right in the final version of the paper!}

Let $\VarX_1,...,\VarX_n$ be a sequence of variables that can be safely mapped to types $\BaseTypeX_1,...,\BaseTypeX_n$ and for which $\VarXVirtual_i$ occur free in $\AsrtX$ for $i \in \mathbb{N}^1_n$. Also, let $\LVarX_{\VarXVirtual_1},...,\LVarX_{\VarXVirtual_n}$ be a corresponding sequence of logical variables of types $\BaseTypeX_1,...,\BaseTypeX_n$. Then,

$\Upsilon(\AsrtX) \defined \exists \LVarX_{\VarXVirtual_1} : \BaseTypeX_1,...,\LVarX_{\VarXVirtual_n} :\BaseTypeX_n \quantsep \Upsilon_S(\AsrtX) \wedge \Upsilon_M(\AsrtX)$

$\Upsilon_S(\AsrtX) \defined \AsrtX[\VarXVirtual_1,...,\VarXVirtual_n := \LVarX_{\VarXVirtual_1},...,\LVarX_{\VarXVirtual_n}]$, $\Upsilon_M(\AsrtX) \defined \BaseTypeX_1(\VarX_1,\LVarX_{\VarXVirtual_1}) \wedge ... \wedge \BaseTypeX_n(\VarX_n,\LVarX_{\VarXVirtual_n})$

%
%
%
%

%
}
\end{definition}

The precise definition of which variables can be ``safely mapped'' depends on the type information available. For the verifier that will be discussed in Section~\ref{sec:auto:type_safety_verifier}, the $\VarX_i$ may be local variables $\VarLX$ or instance variables of the current object $\this{}.\VarIX$. Note that $\mathit{Asrt}_\Upsilon$ conservatively extends $\mathit{Asrt}$, as any assertion $\AsrtX \in \mathit{Asrt}$ is mapped to itself. We hence assume $\Upsilon$ to be implicitly applied to all assertions, enabling the pervasive use of automatic object mapping. For instance, assuming that $\mathit{safe}_{\mathbb{N}}(\PL{u})$ could be established in the lower layer, the $\Upsilon$-assertion $\VarLXVirtual < 5$ can be used instead of the equivalent $\exists \LVarX_{\VarLXVirtual} : \mathbb{N} \quantsep \LVarX_{\VarLXVirtual} < 5 \wedge \mathbb{N}(\VarLX,\LVarX_{\VarLXVirtual})$. To formally show that the automatic object mapping allows us to trivially map \textbf{stat} assertions into \textbf{dyn} assertions, we need a mapping $\Theta$ between their states.

\myparagraphB{Translating States:} $\Theta(\sigma_s) \defined \sigma_d$ where $\sigma_d$ is derived from $\sigma_s$ by introducing for every base type $\BaseTypeX \in \Type_s \setminus \{\mathbb{O},\mathit{Null}\}$ a (possibly infinite) set of objects $\{ o_v \mid v \in \Domain_\BaseTypeX \wedge \BaseTypeX(o_v,v) \}$ and substituting every variable $\VarX$ of base type $\BaseTypeX$, holding the value $v \in \Domain_\BaseTypeX$ by a variable $\VarX$ of type $\mathbb{O}$, referencing the object $o_v$. Furthermore, for each base type $\BaseTypeX \in \Type_s \setminus \{\mathbb{O},\mathit{Null}\}$, we identify any two objects $o_1,o_2$ iff $\BaseTypeX(o_1,v_1)$, $\BaseTypeX(o_2,v_2)$ and $v_1 = v_2$. We lift this equivalence to \textbf{dyn} states in the natural way. 

\myparagraphB{Translating Assertions:} $\Theta(\AsrtX) \defined \AsrtX[\VarX_1,...,\VarX_n := \VarXVirtual_1,...,\VarXVirtual_n]$ where $\VarX_i$ are all variables that can be safely mapped and occur free in $\AsrtX$.
\begin{theorem}\label{thm:layer_assertion_homomorphism}
 For all assertions $p$ and \textbf{stat} states $\sigma$: \hfill $\sigma \models p \text{ iff } \Theta(\sigma) \models \Theta(p)$.
\end{theorem}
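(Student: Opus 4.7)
My plan is to prove the theorem by structural induction on the assertion $p$, with the real work concentrated in an auxiliary correspondence lemma for logical expressions. The assertion $\Theta(p)$ lives in $\mathit{Asrt}_\Upsilon$ and is only meaningful once $\Upsilon$ is unfolded; so the first move is to rewrite the right-hand side as $\Theta(\sigma) \models \Upsilon(\Theta(p))$, i.e.
\[
\Theta(\sigma) \models \exists \LVarX_{\VarXVirtual_1}{:}\BaseTypeX_1,\ldots,\LVarX_{\VarXVirtual_n}{:}\BaseTypeX_n \quantsep \Theta(p)[\VarXVirtual_i := \LVarX_{\VarXVirtual_i}] \wedge \bigwedge_i \BaseTypeX_i(\VarX_i,\LVarX_{\VarXVirtual_i}),
\]
where $\VarX_1,\ldots,\VarX_n$ are the safely-mappable variables occurring free in $p$. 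The construction of $\Theta(\sigma)$ guarantees, for each such $\VarX_i$, the existence of a unique (up to the identification relation on $\Theta(\sigma)$) value $v_i \in \Domain_{\BaseTypeX_i}$ satisfying $\BaseTypeX_i(\VarX_i, v_i)$, and moreover $v_i$ is precisely $\sigma(\VarX_i)$. The witnesses for the existentials are therefore forced, which is what makes the biconditional tight.

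The technical core is a correspondence lemma for logical expressions: for every $\LExpX \in \mathit{LExp}$ in which only safely mappable program variables occur, and every \textbf{stat} state $\sigma$, the value of $\LExpX$ in $\sigma$ equals the value of $\Theta(\LExpX)[\VarXVirtual_i := \LVarX_{\VarXVirtual_i}]$ in $\Theta(\sigma)$ under the assignment $\LVarX_{\VarXVirtual_i} \mapsto \sigma(\VarX_i)$. This is proved by induction on $\LExpX$, using the definition of $\Theta$ on states (which was precisely engineered so that $\BaseTypeX(o_v,v)$ holds for the chosen representative $o_v$) and the fact that constants $c_\pure$ and operations $\mathit{op}(\Vector{\LExpX})$ are interpreted identically in both $\sigma$ and $\Theta(\sigma)$ because they range over basic data types which are preserved. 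The non-strict-equality convention for $\Vnull$ needs to be respected but poses no difficulty: dereferencing $\Vnull$ yields $\Vnull$ on both sides.

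The induction on $p$ itself is then routine. Atomic equations $\LExpX_1 = \LExpX_2$ reduce to the lemma above. Boolean connectives go through directly by the induction hypothesis. For quantifiers $\exists \LVarX : \BaseTypeX^* \quantsep p$, the quantifier ranges over sequences of basic-type values, which are unaffected by $\Theta$ (the domains $\Domain_\BaseTypeX$ are shared between \textbf{dyn} and \textbf{stat}), so the same witness works on both sides, and we can apply the induction hypothesis to the body after a standard substitution argument.

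The main obstacle I anticipate is the interplay between the substitution $[\VarX_i := \VarXVirtual_i]$ inside $\Theta(p)$ and the existential binding added by $\Upsilon$: one has to be careful that the free occurrences of $\VarX_i$ in a logical expression $\LExpX$ are exactly what get replaced, while occurrences of $\VarX_i$ inside navigations like $\LExpX.\VarIX$ (where $\VarX_i$ appears inside $\LExpX$ referencing an object) are treated consistently. Making this precise requires specifying that the automatic variable mapping only applies at the leaves where $\VarX_i$ stands on its own, and showing that this restricted substitution still commutes with expression evaluation under the mapping predicate. Once this bookkeeping is done cleanly, the two directions of the biconditional follow by choosing the forced witnesses in one direction and by projecting them away in the other.
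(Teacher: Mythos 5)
Your proposal follows essentially the same route as the paper's proof: establish that the safely-mappable variables of $\Theta(\sigma)$ force unique witnesses for the existentials introduced by $\Upsilon$ (the paper phrases this as the assumption $\llbracket \VarXVirtual \rrbracket(\Theta(\sigma)) = \llbracket \VarX \rrbracket(\sigma)$ being guaranteed by the mapping predicates $\Upsilon_M$), prove a correspondence lemma $\llbracket l \rrbracket(\sigma) = \llbracket \Theta(l) \rrbracket(\Theta(\sigma))$ for logical expressions by structural induction, and then conclude by induction over the structure of assertions. Your version is more explicit about unfolding $\Upsilon$ and about the substitution bookkeeping, but the decomposition and the key lemma are the same.
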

The automatic mapping requires safety predicates to be pre-established in the lower layer, which requires both type information and tracking of null values. \todo{state somewhere that we regard these both as type information}

  \subsection{Pure Expressions}
    \label{sec:layer:pure_expressions}
    
\HLs allow highly effective reasoning by including (syntactically identified) pure program expressions into their logical assertions. In this section, we will show that assuming the availability of type information in the lower layer, this concept is also applicable to dynamically-typed languages.

To define a pure subset of \textbf{dyn} expressions, one complements the set of ``pure'' classes $\Class_{\pure}$ with a set of ``pure'' (i.e. side-effect-free) methods $\Method_{\pure} \subseteq \Method$ and extends the function $\Psi$ to also map method- and constructor calls to corresponding logical expressions. Such an expression $\LExpX \in \mathit{LExp}$ of type $\BaseTypeX$ with free variables $\LVarX_0,...,\LVarX_n$ of types $\BaseTypeX_0, ..., \BaseTypeX_n$ can be interpreted as a function $f_\LExpX: \BaseTypeX_0  \times ... \times \BaseTypeX_n \mapsto \BaseTypeX$. We hence denote its type as $\mathit{LExp}(\BaseTypeX_0  \times ... \times \BaseTypeX_n \mapsto \BaseTypeX)$. The extension of the mapping $\Psi$ can then be written as follows:

For every pure operation \PL{m} of arity $n$:

\hspace{1cm} $\Psi: (\BaseTypeX_0.\PL{m}(\BaseTypeX_1,...,\BaseTypeX_n) \rightarrow \BaseTypeX) \mapsto \mathit{LExp}(\BaseTypeX_0 \times ... \times \BaseTypeX_n \mapsto \BaseTypeX)$

For every pure constructor $\PL{new C}$ of arity $n$:

\hspace{1cm} $\Psi: (\Psi(C).\PL{init}(\BaseTypeX_1,...,\BaseTypeX_n) \rightarrow \BaseTypeX) \mapsto \mathit{LExp}(\BaseTypeX_1 \times ... \times \BaseTypeX_n \mapsto \BaseTypeX)$

For the type $\mathbb{N}$ these are

\hspace{1cm} $\Psi(\mathbb{N}.\PL{init}(\mathit{Null})) = 0$, $\Psi(\mathbb{N}.\PL{init}(\mathbb{N})) = v_1 + 1$,

\hspace{1cm} $\Psi(\mathbb{N}.\PL{add}(\mathbb{N})) = v_0 + v_1$, $\Psi(\mathbb{N}.\PL{succ}()) = v_0 + 1$.

It is then possible to define a predicate $\ppure(\ExprX)$ automatically identifying pure expressions given type information for all variables used. $\Psi$ can be extended to map such pure program expressions to typed logical expressions. We denote the type of a pure expression by $\tau(e)$. Then, after establishing that

\vspace{0.1cm}
\centerline{$ \{\AsrtX[\hat{\result} := \Psi(\BaseTypeX_0.\MethodX(\BaseTypeX_1,...,\BaseTypeX_n)\rightarrow \BaseTypeX)]\} \VarLX_0.\MethodX(\VarLX_1,...,\VarLX_n) \{\AsrtX\} $}
\vspace{0.1cm}

with $\BaseTypeX_i = \tau(\VarLX_i)$ for all $i \in \mathbb{N}_n$ holds for all methods in $\Method_\pure$, the following axiom can be established by induction over the structure of $\ExprX$

\vspace{0.1cm}
\noindent AXIOM: PURE EXPR: \hspace{1cm} $\{\AsrtX[\hat{\result} := \Psi(\ExprX)]\} \ExprX \{\AsrtX\}$ where $\ppure(\ExprX)$
\vspace{0.1cm}

Combining the axiom with \textbf{dyn}-specific proof rules yields simplified rules for pure expressions that closely resemble those for \textbf{stat}. For instance:

\vspace{0.2cm}

\noindent
\begin{minipage}{0.4\textwidth}
  \noindent AXIOM: PURE ASGN

  \vspace{0.1cm}
  \centerline{$\{\AsrtX[\VarXVirtual := \Psi(\ExprX)]\} \VarX := \ExprX \{\AsrtX\}$}
  \vspace{0.1cm}

  where $\ppure(\ExprX)$, $\tau(\ExprX) \sqsubseteq \tau(\VarX)$.
\end{minipage}
\begin{minipage}{0.6\textwidth}
\noindent RULE: PURE COND
\vspace{0.1cm}

\centerline{
  \AxiomC{$\{p \wedge \Psi(e)\} S_1 \{q\}$}
  \AxiomC{$\{p \wedge \neg \Psi(e)\} S_2 \{q\}$}
  \BinaryInfC{$\{p\}$ if $e$ then $S_1$ else $S_2$ fi $\{q\}$}
  \DisplayProof
}

  where $\ppure(e)$ and $\tau(e) = \mathbb{B}$.
\end{minipage}
\vspace{0.1cm}
\todo{think about: can the automatically inserted safety conditions break anything?}


Definitions for $\ppure(e)$, $\Psi: \mathit{Expr}_d \mapsto \mathit{LExp}$ and $\tau(e)$ as well as omitted rules and soundness proofs can be found in Appendix~\ref{app:pure_expressions}. Finally, we are able to state the main theorem of this section: in combination with decomposition and automatic object mapping, above rules allow verification just like in statically typed languages. This follows from the fact that \textbf{stat} proofs closely resemble \textbf{dyn} proofs using these techniques.

\myparagraphB{Translating Programs:} Since \textbf{stat} $\subset$ \textbf{dyn}, we simply have $\Theta(S) \defined S$.

\myparagraphB{Translating Proofs:} $\Theta(\phi) = \varphi$ is defined inductively over the structure of the proof $\phi$ in Hoare logic for \textbf{stat}.
Applications of the rules ASGN,COND,LOOP and METH need to be substituted for applications of PURE ASGN, PURE COND, PURE LOOP and PURE METH + PURE ASGN respectively. Note that this is always possible as \textbf{stat} expressions are pure and well-typed pure assignments preserve safety predicates. \todo{do we need to show this?}
Applications of all other rules can be preserved, as they are identical for \textbf{dyn} and \textbf{stat}.
%
%
%
%
%
\begin{theorem}
 \label{thm:translation_stat_to_dyn}
 For every \textbf{stat} program $S$ and every correctness proof $\phi$ of a property $\{p\} S \{q\}$ in Hoare logic for a particular correctness notion of \textbf{stat} programs, $\Theta(\phi)$ is a valid proof of the property $\{\Theta(p)\} S \{\Theta(q)\}$ in Hoare logic for a corresponding type-unsafe correctness notion of \textbf{dyn} programs.
\end{theorem}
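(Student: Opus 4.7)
The plan is to proceed by structural induction on the stat derivation $\phi$. Because $\Theta(S) = S$, the program text is unchanged and at each node I only have to verify that the rule applied in $\Theta(\phi)$ is actually an instance of the corresponding dyn rule with translated assertions. For the rules that are identical in \HLs and \HLd (sequential composition, consequence, skip, recursion in its type-unsafe form, etc.), the inductive step is immediate: the assertion translation $\Theta$ merely substitutes virtual variables $\hat{x}$ for safely-mapped program variables $x$, and Theorem~\ref{thm:layer_assertion_homomorphism} guarantees that side conditions phrased as implications remain valid under this substitution.

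The substantive cases are the four rules rewritten by $\Theta$, namely ASGN, COND, LOOP and METH. For ASGN and COND (and analogously LOOP), I would first establish a substitution lemma
\[
\Theta\bigl(p[x := e_{\pure}]\bigr) \;\leftrightarrow\; \Theta(p)\bigl[\hat{x} := \Psi(e)\bigr],
\]
proved by induction on $p$, using the definition of $\Upsilon$ and the fact that $\Psi(e)$ is by construction the logical expression whose value coincides with the object denoted by $e$ under the mapping predicates $\BaseTypeX(\cdot,\cdot)$. Once this lemma is in hand, a stat ASGN step translates directly into a PURE ASGN step, and the two branches of stat's COND translate into the corresponding PURE COND branches after rewriting $\Theta(p \wedge b_{\pure})$ as $\Theta(p) \wedge \Psi(b)$; purity and the type constraint $\tau(b) = \mathbb{B}$ are inherited from the typing of the original stat program.

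The METH case is slightly subtler because a single stat METH collapses into a PURE METH immediately followed by a PURE ASGN in dyn. Here I would thread the translation through by introducing the intermediate fresh locals that PURE METH demands and applying the substitution lemma once per argument in order to recover the simultaneous substitution $p[\Vector{x} := \Vector{e_{\pure}}]$ from a chain of single substitutions. Pure constructors are handled identically because $\Psi$ was extended to map both methods and constructors to logical expressions.

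The hard part will be discharging the side conditions of the dyn pure rules uniformly along the translated proof: one must keep available, at every step, (i) a derivation of $\ppure(e)$ for each expression $e$ coming from stat, (ii) the subtyping $\tau(e) \sqsubseteq \tau(x)$ at every assignment, and (iii) the safety predicates $\mathit{safe}_{\BaseTypeX}$ needed to legitimise the automatic mapping $\Upsilon$. Items (i) and (ii) follow globally from stat's static type discipline, but (iii) must be propagated as an invariant. I would establish this via a side induction on $\phi$ showing that the lower layer always supplies the requisite safety predicates, exploiting the observation already mentioned in the paper that well-typed pure assignments preserve them.
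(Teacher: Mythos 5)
Your proposal follows essentially the same route as the paper's own argument: structural induction over $\phi$, invoking Theorem~\ref{thm:layer_assertion_homomorphism} for the assertion translation and observing that the side conditions of the pure-expression rules (purity, typing, safety predicates) are automatically met because $S$ is statically typed, with stat's METH translating to PURE METH followed by PURE ASGN. In fact your write-up is more explicit than the paper's one-sentence sketch, in particular in isolating the substitution lemma $\Theta(p[x := e_{\pure}]) \leftrightarrow \Theta(p)[\hat{x} := \Psi(e)]$ and the propagation of the $\mathit{safe}_{\BaseTypeX}$ invariant, both of which the paper leaves implicit.
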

Furthermore, since types for \textbf{stat} programs can be inferred, their type-safety proofs can be constructed automatically (see Section~\ref{sec:auto:typings2proofs}). Applying the decomposition rule (Definition~\ref{def:decomposition}) then yields a proof for type-safe correctness. It follows that for statically typable programs, deriving a proof in \textbf{dyn} (using the layer of abstraction) does not require any more effort than deriving it in \textbf{stat}. The remainder of this paper will discuss how the layer of abstraction can be applied to arbitrary dynamically typed programs by deriving the necessary type information. In Section~\ref{sec:verifying_ev-example}, we will demonstrate this point by proving the evaluator example correct.

\section{Deriving Type Information}
  \label{sec:deriv}
  
  \subsection{Type Information and Type Safety}
    \label{sec:deriv:type_inference_and_type_safety}
    

A program $\pi$ is called \emph{type-safe} if no execution of $\pi$ can result in a type error. \emph{Type safety} is the problem of deciding whether a given program is type-safe. Since type errors can be regarded as a form of output, type safety is a nontrivial semantic property and hence undecidable for Turing complete languages by Rice's theorem.\todo{check the exact formulation}

A \emph{type} $\UnionTypeX$ is an element of a complete lattice $(\Type, \sqsubseteq) = (2^\Class,\subseteq)$.
%
%
A \emph{typing} $\typingX$ of a program $\pi$ is an arbitrary data structure giving rise to a mapping $\typingX(\StmtX): \mathit{Stmt} \mapsto \mathcal{T}$ from sub-statements $\StmtX$ of $\pi$ to types. It is important to stress that a sub-statement occurring multiple times in $\pi$ is treated as multiple different statements. One can think of statements as represented by their parse tree nodes.


A typing $\typingX$ for a program $\pi$ is called \emph{sound} iff in every execution of $\pi$, whenever a sub-statement $\StmtX$ is evaluated to a value $v$, then $v$ is of a type $\UnionTypeX \sqsubseteq \typingX(\StmtX)$. A typing $\typingX$ is at least as \emph{precise} as another typing $\typingY$, written $\typingX \sqsubseteq \typingY$, iff for all statements $\StmtX$ it holds that $\typingX(\StmtX) \sqsubseteq \typingY(\StmtX)$.

For a program $\programX$, the \emph{least precise type-safe typing} $\smin_\pi$ is a typing where for every method call $e_0.\PL{m}(e_1,...,e_n)$, $\smin_\programX(e_0) = \{ \ClassX \in \Class \mid \ClassX \text{ supports method $\PL{m}$ }\allowbreak{} \text{of }\allowbreak{} \text{arity }\allowbreak{} \text{$n$} \}$ and for every conditional or while loop with condition $e$, $\smin_\pi(e) = \{\mathit{bool}\}$ and for all other sub-statements $\StmtX$, $\smin_\programX(\StmtX) = \top$. By definition, a program $\programX$ is type-safe iff it has a sound\footnote{if a method call, conditional or while loop is unreachable, sound typings may assign the type $\bot$ to its receiver / condition.} typing $\typingX$ that is precise enough to establish type safety ($\typingX \sqsubseteq \smin_\programX$).

Type safety verifiers (type inference algorithms) derive a typing for a given program by over-approximating its behavior. A verifier is \emph{sound} iff the typings it derives are.



Note that given a typing $\typingX$ for a program $\programX$, it is straightforward to decide $\typingX \sqsubseteq \smin_\programX$. In fact, this would even be possible if $\typingX$, like $\smin_\programX$, would assign $\top$ to all non-receiver, non-condition subexpressions. However, deciding soundness usually requires more information. For this reason, sound type safety verifiers usually
a) assign types to all subexpressions and
b) provide a set of inference rules (commonly called a ``type system'')
allowing to check safety of their derived typings using this additional type information. A soundness proof for these rules with respect to the semantics of the programming language is a crucial part of proving such algorithms sound.

Some algorithms (e.g. context-sensitive ones) even assign multiple types to each statement (one for each context). While $\typingX(\StmtX)$ in this case yields the union of all types assigned to $\StmtX$, the soundness proof may differentiate these types.

As type safety verifiers differ, so do their typings. We associate with each verifier $\Verifier_X$ a kind of typing capturing its respective format and restrictions. Between such kinds, It is possible to translate in both directions. However, as the precision achievable with a verifier $\Verifier_X$ varies, so does the precision expressible using $\Verifier_X$-typings. For instance, while it is usually possible to translate path-insensitive typings into path-sensitive ones by assigning the same types to each path, the reverse direction entails merging paths and thus a loss of precision.



  \subsection{Type Safety Proofs are Type Information}
    \label{sec:deriv:extracting_type_information}
    
A \emph{type safety proof} for a statement $\StmtX$ is a proof of the property $\{\AsrtX\} \StmtX \{\mathit{true}\}$ for some precondition $\AsrtX$ in Hoare logic for type-safe partial correctness. When run from a state satisfying $\AsrtX$, it ensures type-safety of $\StmtX$ by establishing all type-safety preconditions.

Such proofs constitute a kind of typing as their assertions contain type information that is by definition sufficient to establish type safety. Soundness of these typings can be validated using the proof rules of Hoare logic. Before discussing how to extract type information from a Hoare logic proof, one should state that this information needs to be compatible with the type safety verifier to be useful for our purpose. We hence define  \emph{typing assertions} $\mathit{TAsrt} \subset \mathit{Asrt}$ as a subset of the assertion language modeling the capabilities of this verifier.

For instance, the verifier $\VEx$ that will be presented in Section~\ref{sec:auto:type_safety_verifier} is based on a flow-sensitive, path-sensitive data flow analysis. As usual, only local variables of the current method and instance variables of the current object are tracked flow-sensitively. The remainder of the heap is abstracted into a finite number of type variables $\typevarIvarX$ -- one for each instance variable $\VarIX$ of each class $\ClassX$.

Logically, $\VEx$ establishes a \emph{global typing invariant} of the form 
\[ \Ex{\TInv}(\typingX) \;\; \defined \;\; \forall o \quantsep \bigwedge\limits_{\ClassX \in \Class} \left( \llbracket o \rrbracket \in \{\ClassX\} \rightarrow \bigwedge\limits_{\VarIX \in \Var_{\ClassX}} \llbracket o.\VarIX \rrbracket \in \typingX(\typevarIvarX) \right) \]
for a $\VEx$-typing $\typingX$, stating the fact that the types assigned to the type variables $\typevarIvarX$ in $\typingX$ are over-approximating the actual types of those instance variables. Also, automatic verifiers provide for each program location the return type of the previously executed expression as well as the types of all variables tracked flow-sensitively. Logically, those can be regarded as a conjunction of typing literals (see below). Additionally, path sensitivity allows differentiating different paths leading to a program location and hence requires expressing alternatives, leading us to a disjunctive normal form of typing literals. Hence, only the literals allowed in typing assertions are verifier-specific\footnote{Adding the literals $\VarLX = \Vnull$ and $\this{}.\VarIX = \Vnull$ allows tracking null values.}. For $\VEx$ we define

\todo{insert changes from the talk}
\vspace{0.2cm}
$\mathit{TAsrt} \ni \TAsrtX ::= \TLitX \mid \TAsrtX \vee \TAsrtX \mid \TAsrtX \wedge \TAsrtX$, \hfill $\Ex{\mathit{TLit}} \ni \TLitX ::= \neg \TLitX \mid \llbracket \VarLX \rrbracket \in \UnionTypeX \mid \llbracket \this{}.\VarIX \rrbracket \in \UnionTypeX$
\vspace{0.2cm}

\todo{check if $\restricted$-Hoare logic also works with the typing assertions containing disjunctions and typing literals about instance variables of the current object}
%
%
%
\todo{check if these things also hold with the new typing assertions}
%
%
%
%
%
%
%
%
%

We will now define how to extract type information from Hoare logic proofs. In such a proof, each postcondition may contain flow-sensitive type information about variables as well as the return value $\result$ of the previous expression. Given an assertion $\AsrtX$, one extracts this information by first converting $\AsrtX$ into disjunctive normal form, treating typing literals, equations and quantifiers as literals and then applying a projection $\projection_X: \mathit{Asrt} \mapsto \mathit{TAsrt}_X$ that preserves $\wedge, \vee, \mu$ while mapping all literals $\not\in \mathit{TLit}_X$ to $\mathit{true}$ ($\defined \llbracket \this{} \rrbracket \in \top$).
Every assertion $\AsrtX$ thus implies $\projection_X(\AsrtX)$. Note that depending on the structure of $\AsrtX$, there might be a significant loss of precision. This is unproblematic, however, as supplying type information is in the user's interest. The following theorems show that sufficiently precise type information can always be supplied for type-safe programs.
\begin{lemma}\label{lem:extractability}
 For every assertion $\AsrtX$ and every $\Verifier_X$-typing assertion $\TAsrtX$ such that $\AsrtX \rightarrow \TAsrtX$, there exists an equivalent assertion $\AsrtX' \leftrightarrow \AsrtX$ such that $\projection_X(\AsrtX') = \TAsrtX$.
\end{lemma}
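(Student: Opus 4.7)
The naive attempt $p' \defined p \wedge \tau$ already satisfies $p' \leftrightarrow p$, but $\projection_X(p \wedge \tau)$ distributes to $\projection_X(p) \wedge \tau$, and $\projection_X(p)$ can be strictly stronger than $\tau$ (for instance, $p \equiv \llbracket \VarLX \rrbracket \in \{num\}$ implies $\tau \equiv \llbracket \VarLX \rrbracket \in \{num, bool\}$ but projects to itself). So the real challenge is to replace $p$ by a logically equivalent formula whose typing content is invisible to $\projection_X$. The key observation is that the DNF conversion prescribed by $\projection_X$ treats whole quantified subformulas as atoms; since no quantified formula is a typing literal, $\projection_X$ necessarily maps every such atom to $\mathit{true}$.

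Exploiting this, I will set $p' \defined \tau \wedge (\exists \LVarX : \mathbb{N}^* \quantsep p)$ where $\LVarX$ is a logical variable not occurring free in $p$. First I verify $p' \leftrightarrow p$: since $\mathbb{N}^*$ is inhabited (by the empty sequence) and $\LVarX$ is fresh, $(\exists \LVarX : \mathbb{N}^* \quantsep p) \leftrightarrow p$; combined with the hypothesis $p \rightarrow \tau$ this yields $p' \leftrightarrow p \wedge \tau \leftrightarrow p$. Then I compute $\projection_X(p')$: writing $\tau = \bigvee_i \bigwedge_j \mu_{ij}$ in DNF (always possible from the grammar of $\mathit{TAsrt}$), conjoining with the single-atom existential and distributing produces the DNF $\bigvee_i \bigl( (\exists \LVarX : \mathbb{N}^* \quantsep p) \wedge \bigwedge_j \mu_{ij} \bigr)$. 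Applying $\projection_X$ literal-wise replaces the quantified atom by $\mathit{true}$ and preserves each $\mu_{ij}$, yielding $\tau$ (modulo the trivial simplification $\mathit{true} \wedge \phi = \phi$, i.e.\ up to the evident equivalence inside $\mathit{TAsrt}_X$).

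The only real obstacle I foresee is technical: pinning down whether the claimed equality $\projection_X(p') = \tau$ is intended syntactically or up to equivalence within $\mathit{TAsrt}_X$, and whether the DNF used by $\projection_X$ has a canonical choice that makes the equality literal. Either reading goes through with the same proof; no further cleverness is required beyond the quantifier-wrapping trick that renders $p$ opaque to the projection while leaving its meaning unchanged.
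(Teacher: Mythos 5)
Your proposal is correct, but it deliberately rejects the very construction the paper uses: the paper's entire proof is ``$\AsrtX' \equiv \AsrtX \wedge \TAsrtX$ has the described properties.'' Your counterexample against that one-liner is valid as the lemma is literally stated: with $\AsrtX \equiv \llbracket \VarLX \rrbracket \in \{num\}$ and $\TAsrtX \equiv \llbracket \VarLX \rrbracket \in \{num,bool\}$ one gets $\projection_X(\AsrtX \wedge \TAsrtX) = \llbracket \VarLX \rrbracket \in \{num\} \wedge \llbracket \VarLX \rrbracket \in \{num,bool\}$, which is neither syntactically equal nor logically equivalent to $\TAsrtX$. The paper's construction only yields $\projection_X(\AsrtX') \rightarrow \TAsrtX$ in general; it recovers $\TAsrtX$ (and then only up to equivalence) in the special case where $\TAsrtX \rightarrow \projection_X(\AsrtX)$, which happens to hold in the lemma's one application (Theorem~\ref{thm:extractability}, where $\TAsrtX$ is taken most precise and hence implies the typing assertion $\projection_X(\AsrtX)$ already extractable from $\AsrtX$). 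Your quantifier-wrapping trick, $\AsrtX' \defined \TAsrtX \wedge (\exists \LVarX : \mathbb{N}^* \quantsep \AsrtX)$ with $\LVarX$ fresh, is sound: the projection treats quantified subformulas as atomic literals outside $\mathit{TLit}_X$ and maps them to $\mathit{true}$, so the typing content of $\AsrtX$ is rendered invisible and exactly $\TAsrtX$ survives. What your approach buys is the lemma in its stated generality (arbitrary $\TAsrtX$ with $\AsrtX \rightarrow \TAsrtX$, not just the most precise one) at the cost of a cosmetic caveat you correctly flag -- the residual $\mathit{true}$ conjuncts and the choice of DNF mean the equality $\projection_X(\AsrtX') = \TAsrtX$ holds only up to trivial normalisation, an imprecision already present in the paper's own formulation.
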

A $\Verifier_X$-typing assertion $\TAsrtX$ is \emph{most precise} for an assertion $\AsrtX$ iff $\AsrtX \rightarrow \TAsrtX$ and for all $\Verifier_X$-typing assertions $\TAsrtX'$, $\AsrtX \rightarrow \TAsrtX'$ implies $\TAsrtX \rightarrow \TAsrtX'$.
\begin{theorem}\label{thm:extractability}
 For every verifier $\Verifier_X$, each type safety proof $\psi$ has an equivalent proof $\psi'$, such that for every assertion $\AsrtX'$ in $\psi'$, $\projection_X(\AsrtX')$ is most precise for $\AsrtX'$.
\end{theorem}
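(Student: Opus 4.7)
The plan is to transform the given type safety proof $\psi$ into a witness $\psi'$ by rewriting each assertion that appears in it. For every assertion $\AsrtX$ occurring in $\psi$, first identify the most precise $\Verifier_X$-typing assertion $\TAsrtX_\AsrtX$ implied by $\AsrtX$, then invoke Lemma~\ref{lem:extractability} to obtain an equivalent assertion $\AsrtX' \leftrightarrow \AsrtX$ with $\projection_X(\AsrtX') = \TAsrtX_\AsrtX$. Substituting $\AsrtX'$ for $\AsrtX$ throughout $\psi$ (and patching rule applications with the consequence rule where needed) should yield the desired $\psi'$.

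The first nontrivial subgoal is the existence of a unique most precise $\TAsrtX_\AsrtX$. I would argue from finiteness: the set $\Class$ is fixed and finite, the verifier $\VEx$ tracks only finitely many program variables per scope (local variables $\VarLX$ of the current method and instance variables $\this{}.\VarIX$ of the current object), and the lattice $(\Type,\sqsubseteq) = (2^\Class,\subseteq)$ is finite. Hence the set $\Ex{\mathit{TLit}}$ is finite up to logical equivalence, and so is the set of typing assertions built from it by $\wedge,\vee$. The family of typing assertions implied by $\AsrtX$ is closed under finite conjunction, so its greatest element under implication exists and is the most precise one.

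With $\TAsrtX_\AsrtX$ in hand, Lemma~\ref{lem:extractability} furnishes the equivalent assertion $\AsrtX'$. The remaining work is to check that globally substituting $\AsrtX \mapsto \AsrtX'$ preserves the validity of $\psi$. Since every rewriting is through a logical equivalence, any semantic side condition (e.g.\ the implication $p_i \rightarrow \llbracket \PL{v}'_i \rrbracket \in \{\PL{C}_i\}$ in REC, or the substitution premises in ASGN and METH) transfers verbatim from the original to the rewritten assertions. Where a rule instance literally refers to a specific assertion shape and the rewriting alters that shape, I would insert an application of the consequence rule on both sides, which is sound because $\AsrtX \leftrightarrow \AsrtX'$ yields both $\AsrtX \rightarrow \AsrtX'$ and $\AsrtX' \rightarrow \AsrtX$.

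The principal obstacle I anticipate is not conceptual but bookkeeping: one has to verify that the substitution respects the side conditions of every rule uniformly, especially the structurally delicate METH and REC rules that link several assertions across a derivation tree and quantify over fresh variables. Since the transformation is purely logical, this boils down to a systematic case analysis over the rules of the Hoare logic rather than any genuinely new mathematics, and the argument extends straightforwardly to any verifier whose typing-literal set is finite up to equivalence.
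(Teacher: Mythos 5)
Your proposal is correct and follows essentially the same route as the paper, whose entire proof reads ``Follows from Lemma~\ref{lem:extractability} and the definition of the most precise typing assertion'': replace each assertion $\AsrtX$ by the equivalent $\AsrtX' \equiv \AsrtX \wedge \TAsrtX_\AsrtX$ supplied by Lemma~\ref{lem:extractability} for the most precise $\TAsrtX_\AsrtX$. You additionally supply two details the paper leaves implicit --- the existence of a most precise typing assertion via finiteness of $\Ex{\mathit{TLit}}$ up to equivalence (which, as you note, is also where the claim ``for every verifier'' quietly relies on such finiteness) and the consequence-rule patching needed to keep syntactically constrained rules like ASGN and METH applicable --- both of which strengthen rather than alter the argument.
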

Furthermore, one can define a projection $\projection^\VarX_X$ further projecting $\Verifier_X$ typing assertions to summary types for the variable $\VarX$ such that for all assertions $\AsrtX$, all variables $\VarX$ and all verifiers $\Verifier_X$ we have $\AsrtX \rightarrow \llbracket \VarX \rrbracket \in \projection_X^{\VarX}(\AsrtX)$. For $\VEx$:

$\Ex{\projection}^{\VarX}(\llbracket \VarX \rrbracket \in \UnionTypeX) \defined \UnionTypeX$, \hfill
$\Ex{\projection}^{\VarX}(\llbracket \VarX' \rrbracket \in \UnionTypeX) \defined \top$ with $\VarX' \not= \VarX$,
\hfill $\Ex{\projection}^{\VarX}(\neg \mu) \defined \top \setminus \Ex{\projection}^{\VarX}(\mu)$


$\Ex{\projection}^{\VarX}(\TAsrtX \wedge \TAsrtX') \defined \Ex{\projection}^{\VarX}(\TAsrtX) \sqcap \Ex{\projection}^{\VarX}(\TAsrtX')$, \hfill $\Ex{\projection}^{\VarX}(\TAsrtX \vee \TAsrtX') \defined \Ex{\projection}^{\VarX}(\TAsrtX) \sqcup \Ex{\projection}^{\VarX}(\TAsrtX')$

We extend $\projection_X^{\VarX}$ to assertions by defining $\projection_X^{\VarX} \defined \projection_X^{\VarX} \circ \projection_X$. Using it, every type safety proof $\proofX$ gives rise to a $\Verifier_X$-typing $\typingX_\proofX$ assigning every sub-statement $\StmtX$ the type $\projection_X^{\result}(q_1 \wedge ... \wedge q_k)$ where $q_i$ are the postconditions of all Hoare triples of the form $\{\AsrtX_i\} \StmtX \{\AsrtY_i\}$ in $\proofX$.

\begin{theorem}[Completeness relative to Hoare logic]
\label{thm:relative_completeness}
 Given completeness of the Hoare logic, for every type-safe program $\pi$ there exists a type safety proof $\psi$ such that $\typingX_\psi$ is sound and precise enough to establish type safety: $\typingX_\psi \sqsubseteq \smin_\pi$.
\end{theorem}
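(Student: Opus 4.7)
The plan is to combine the semantic type-safety of $\pi$ with the assumed completeness of the Hoare logic for type-safe partial correctness, and then refine the resulting derivation via Theorem~\ref{thm:extractability} so that its extracted typing is bounded by $\smin_\pi$.

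First I would observe that, by definition, a type-safe $\pi$ satisfies $\models_{\Ntypeerror\Npost} \{\mathit{true}\}\,\pi\,\{\mathit{true}\}$: the postcondition is trivial and no execution aborts with a type error. By the assumed completeness of the Hoare logic, there is a derivation $\psi_0$ of this triple. An application of Theorem~\ref{thm:extractability} yields an equivalent derivation $\psi$ in which, for every assertion $p'$ occurring in $\psi$, the projection $\projection_X(p')$ is most precise for $p'$. This $\psi$ is the witness.

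Next I would verify the two requirements on $\typingX_\psi$. Soundness follows from semantic soundness of the Hoare logic: whenever a sub-statement $S$ of $\pi$ is evaluated to a value $v$, any triple $\{p_i\}\,S\,\{q_i\}$ of $\psi$ that is applicable at that point has its postcondition $q_i$ satisfied afterwards with $\result = v$; by the defining property of the summary projection, $q_i \rightarrow \llbracket\result\rrbracket \in \projection_X^{\result}(q_i)$, so $v$ has type contained in $\typingX_\psi(S)$. Precision reduces to showing $\typingX_\psi(S) \sqsubseteq \smin_\pi(S)$ for every sub-statement $S$. The cases where $\smin_\pi(S) = \top$ are trivial, so the work concentrates on receivers of method calls and conditions of \PL{if}- and \PL{while}-statements. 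In each of these cases the corresponding \textbf{dyn}-rule (METH, COND or LOOP) contains a type-safety precondition (\TypesafePC{marked} in Figure~\ref{fig:hoare_logic}) that forces every postcondition $q_i$ associated with $S$ in $\psi$ to entail a typing literal of the form $\llbracket\result\rrbracket \in \smin_\pi(S)$ -- the set of classes supporting the called method with the right arity for receivers, and $\{\mathit{bool}\}$ for conditions. Since the projection of each $q_i$ in $\psi$ is most precise, this entailment carries through to $\projection_X(q_i)$, whence $\projection_X^{\result}(q_i) \sqsubseteq \smin_\pi(S)$; intersecting over the (finitely many) triples with statement $S$ preserves the bound, so $\typingX_\psi(S) \sqsubseteq \smin_\pi(S)$.

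The main obstacle is the last case analysis: checking that the type-safety preconditions of the dynamically-typed rules really do force a literal of the required shape into the postcondition of \emph{every} syntactic occurrence of a receiver or condition in $\pi$. This entails inspecting each rule that introduces such a precondition (METH, COND, LOOP and their total-correctness variants in Appendix~\ref{app:hoare_logic}) and arguing that any derivation of a triple whose statement is a receiver or condition must ultimately be a conclusion of one of these rules. Once that is pinned down, the interaction with $\projection_X$ and $\projection_X^{\result}$ is routine, and the soundness/precision clauses assemble into the claim.
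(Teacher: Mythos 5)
Your proposal is correct and follows essentially the same route as the paper, whose own proof is a one-line appeal to exactly the three ingredients you use: completeness of the Hoare logic (to obtain a derivation of $\{\mathit{true}\}\,\pi\,\{\mathit{true}\}$ for type-safe partial correctness), Theorem~\ref{thm:extractability} (to pass to an equivalent proof with most precise projections), and the fact that type safety proofs must discharge the type-safety preconditions (to bound $\typingX_\psi$ by $\smin_\pi$). Your elaboration of the soundness clause and of how the \TypesafePC{marked} preconditions of COND, LOOP and REC force the required literals into the relevant assertions is detail the paper omits entirely, and you correctly flag that last case analysis as the only place where real work remains.
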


It follows that no (sound) typing can be more precise than a type safety proof. It is hence possible to translate from all other kinds of typings into them without incurring any loss of precision.

\section{Semi-Automation}
  \label{sec:auto}
  
%

As type safety is undecidable and full automation hence only achievable at the expense of completeness, we instead aim at semi-automation by integrating a suitable automatic type safety verifier into Hoare logic. Such verifiers have to satisfy the following requirements:
\begin{itemize}
 \item {\bf Soundness} - it safely over-approximates the actual program behavior. 
 \item {\bf Monotonicity} - increasing precision cannot create type errors.
 \item {\bf Refinements} - provides an interface to supply trusted assumptions for increasing precision (Section~\ref{sec:auto:type_safety_verifier}). These must be treated
 \begin{itemize}
  \item Flow sensitively - assumptions should have an associated program location and affect only data flows from that location onward.
  \item Path sensitively - assumptions should be able to use disjunctions ($\vee$) to express alternatives. The verifier should treat these alternatives like different paths reaching the associated program location.
 \end{itemize}
 \item {\bf Termination} - terminates on all inputs (programs).\footnote{Potentially non-terminating analyses like \cite{Henglein91typeinference} must be performed iteratively by first generating a base result and then refining it towards higher precision (similar to \cite{SridharanBodik2006}): thus, they can be interrupted any time and yield the most precise result reached.}
\end{itemize}
Note that flow- and path sensitivity are required only for refinements\footnote{Flow- or path-insensitive assumptions would increase the annotation-burden.}, not for the verifier itself. Also, the chosen Hoare logic must be powerful enough to express the verifier's reasoning. While we consider these requirements to be modest and can hardly imagine an analysis not amendable using this approach, proving this is difficult. In this paper we will therefore concentrate on automatic verifiers based on flow-analysis which are known to contain quite powerful analyses \cite{KastrinisSmaragdakis2013,XuRountev2008,SridharanBodik2006}.

  \subsection{An Examplary Automatic Type Safety Verifier}
    \label{sec:auto:type_safety_verifier}
    In this section we will introduce an exemplary automatic type safety verifier $\VEx$ to in the following complement our abstract discussion with concrete examples using $\VEx$-typings.

To also shed some light on the minimum requirements above, we chose a minimalistic one exactly fulfilling these criteria. $\VEx$ is based on a sound, flow-sensitive data flow analysis and resembles the work of Palsberg et al. \cite{PalsbergS91TypeInfernceOO}. As required, we allow specifying a set of path-sensitive trusted assumptions. However, the analysis does not introduce path-sensitivity by itself.

\myparagraphB{Flow Sensitivity}
Intra-procedurally, local variables as well as instance variables of the current object are tracked flow-sensitively. As usual, this is realized by converting all statements to static single-assignment form (SSA) with respect to these variables prior to analysis. A $\VEx$-typing $\typingX$ hence assigns one type $\typingX(\llbracket \VarX \rrbracket, \locX)$ per program location $\locX$ to each such variable $\VarX$.

\myparagraphB{Path Sensitivity}
To realize intra-procedural path-sensitivity, for each path $j \in \mathit{path}(\post{\StmtX})$ from the start of a method to each program location $\post{\StmtX}$ in the method, the previous sub-statement $\StmtX$ and each flow-sensitively tracked variable $\VarX$ are assigned distinct types $\typingX(\llbracket \StmtX \rrbracket, j)$ and $\typingX(\llbracket \VarX \rrbracket, \post{\StmtX}, j)$ respectively. Program locations are denoted by $\pre{S}$ (or $\post{S}$) for the beginning (end) of a sub-statement $S$ of $\pi$.

\myparagraphB{Null Pointers}
Although \cite{PalsbergS91TypeInfernceOO} is a pure type analysis, we use the same algorithm to also perform  null-pointer analysis. This is realized by defining the value $\Vnull$ as the only
instance of a class $\mathit{Null}$ and furthermore explicitly inserting the class $\mathit{Null}$ into all types whose expression may evaluate to $\Vnull$ instead of implicitly allowing $\Vnull$ to be element of every type's domain. We hence define

$\llbracket \VarX \rrbracket \in \{\mathit{Null},\ClassX_1,...,\ClassX_n\} \equiv \VarX = \Vnull \vee \llbracket \VarX \rrbracket \in \{\ClassX_1,...,\ClassX_n\}$


The interested reader may find a more detailed account of algorithm $\VEx$ in Appendix~\ref{app:verifier_VEx}.

\begin{figure}
 {\def\svgwidth{350px}
   \input{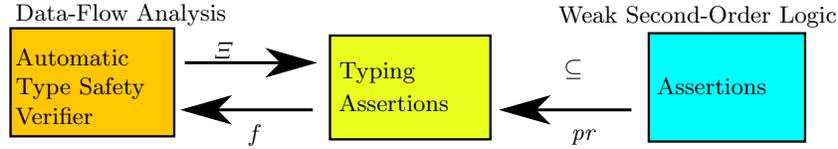}
 }
\caption{\label{fig:logics} Overview of the logics involved and the mappings between them}
\end{figure}


\myparagraphB{Typings to Logic}
The function $\typingassertion_X$ maps the type information contained in a flow-sensitive, path-sensitive $\Verifier_X$-typing $\typingX$ for each program location $\locX$ into a typing assertion. $\VEx$-typings $\typingX$ are flow-sensitive, as $\typingX(\llbracket \VarX \rrbracket,\locX)$, the type of the variable $\VarX$ at program location $\locX$ takes strong updates into account. Hence, for $\VEx$-typings, the function $\Ex{\typingassertion}$ can be defined as:

$\Ex{\typingassertion}(\typingX,\pre{S}) \defined \bigvee_{j \in \mathit{path}(\pre{S})} \bigwedge_{\PL{x} \in \Var_{\Fmethod(\pre{S})}} \llbracket \PL{x} \rrbracket \in \typingX(\llbracket \PL{x} \rrbracket, \pre{S},j )$

$\Ex{\typingassertion}(\typingX,\post{S}) \defined \bigvee_{j \in \mathit{path}(\post{S})} \left( \llbracket \result \rrbracket \in \typingX(\llbracket S \rrbracket,j) \wedge \bigwedge_{\PL{x} \in \Var_{\Fmethod(\post{S})}} \llbracket \PL{x} \rrbracket \in \typingX(\llbracket\VarX\rrbracket,\post{S},j) \right)$

where $S$ is a sub-statement of $\pi$, $\Fmethod(L)$ denotes the method a program location $L$ belongs to and $\Var_{\PL{C.m}} \defined \mathit{var}(S_\PL{C.m}) \cup \mathit{change}(S_{\PL{C.m}})$.

\begin{definition}[Refinement of Typings]
 Let $\typingX$ be a $\VEx$-typing derived for a program $\programX$. Then a \emph{conjunctive refinement step} of $\typingX$ using the trusted assumption $\tau$ at program location $L$ is a quadruple $(\typingX,\tau,L,\typingY)$, written $\typingX \Ex{\stackrel{\tau,L}{\rightarrow}}  \typingY$ with the $\VEx$-typing $\typingY$ being derived for a program $\programX'$ resulting from $\programX$ by inserting the Statement $\mathcal{R}_\TAsrtX$ just before $L$ and $\mathcal{R}_\tau$ being defined inductively as
 
 $\mathcal{R}_{\llbracket \VarX \rrbracket \in \UnionTypeX} \defined \VarX := \VarX \sqcap \UnionTypeX$\footnote{the type filter $\VarX \sqcap \UnionTypeX$ is defined in Appendix~\ref{app:verifier_VEx}}

 $\mathcal{R}_{\TLitX \wedge \TLitX'} \defined \mathcal{R}_{\TLitX}; \mathcal{R}_{\TLitX'}$
 
 $\mathcal{R}_{\TAsrtKX \vee \TAsrtKX'} \defined \PL{if} ... \PL{ then} \mathcal{R}_{\TAsrtKX} \PL{else }\mathcal{R}_{\TAsrtKX'} \PL{ end}$\footnote{the condition does not matter as $\Ex{\Verifier}$ will treat conditionals non-deterministically anyway.}
 
\end{definition}

\begin{theorem}\label{thm:refinement_monotonicity}
 For all conjunctive refinements $\typingX \Ex{\stackrel{\tau,L}{\rightarrow}} \typingY$, \hspace{0.5cm} $\typingY \sqsubseteq \typingX$ holds.
\end{theorem}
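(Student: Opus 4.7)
The plan is to proceed by structural induction on the trusted assumption $\TAsrtX$, exploiting two facts: (i) the program $\programX'$ agrees with $\programX$ strictly before the insertion point $L$, so $\VEx$ computes the same abstract state at every such location; and (ii) $\VEx$ is a monotone flow analysis, so feeding a pointwise smaller (more precise) abstract state into a program fragment yields pointwise smaller abstract states at every downstream location. Granting these, it suffices to show that the inserted block $\mathcal{R}_\TAsrtX$ produces, at its exit, an abstract state pointwise $\sqsubseteq$ the state that $\VEx$ computes for $\typingX$ at $L$; monotonicity then propagates this throughout the remainder of $\programX$.

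For the base literal $\TAsrtX = \llbracket \VarX \rrbracket \in \UnionTypeX$, the inserted statement $\VarX := \VarX \sqcap \UnionTypeX$ replaces the type of $\VarX$ by $\typingX(\llbracket \VarX \rrbracket, L) \sqcap \UnionTypeX$, which lies below $\typingX(\llbracket \VarX \rrbracket, L)$, while leaving all other tracked variables untouched. For $\TAsrtX = \TLitX \wedge \TLitX'$ the insertion is $\mathcal{R}_{\TLitX};\mathcal{R}_{\TLitX'}$; applying the induction hypothesis first to $\mathcal{R}_{\TLitX}$ at $L$ and then to $\mathcal{R}_{\TLitX'}$ at the intermediate location, both steps can only narrow types, and the composition inherits this via transitivity of $\sqsubseteq$. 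For $\TAsrtX = \TAsrtKX \vee \TAsrtKX'$ the insertion is a nondeterministic conditional whose branches are $\mathcal{R}_{\TAsrtKX}$ and $\mathcal{R}_{\TAsrtKX'}$; since $\VEx$ joins the typings computed along the two branches and, by the induction hypothesis, each branch-exit typing is $\sqsubseteq$ the pre-branch typing at $L$, their join is as well.

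The main obstacle will be discharging the monotonicity of $\VEx$ cleanly. I would formalize the analysis as the least fixed point of monotone transfer functions over the complete lattice of $\VEx$-typings (pointwise ordered by $\sqsubseteq$ per location and per path), and then invoke the standard Tarski/Kleene result that shrinking the initial abstract state shrinks the fixed point. Path sensitivity adds a wrinkle, since inserting $\mathcal{R}_\TAsrtX$ creates new control-flow paths (in particular the two branches of the disjunctive case); here I would argue that every path through the inserted block refines some path through the original $\programX$ at $L$, so that the set of path-indexed types in $\Ex{\typingassertion}(\typingY,\cdot)$ is bounded above, entry by entry, by the corresponding set in $\Ex{\typingassertion}(\typingX,\cdot)$, which is precisely what $\typingY \sqsubseteq \typingX$ demands.
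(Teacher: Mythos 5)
Your proposal is correct and follows essentially the same route as the paper's proof: show that the inserted block $\mathcal{R}_\tau$ only meets the types of the mentioned variables with the projected assumption at $L$ (hence narrows them), and then propagate this by monotonicity of the analysis to all downstream locations. The paper states the local narrowing directly as $\typingY(\llbracket \PL{x} \rrbracket, L) = \typingX(\llbracket \PL{x} \rrbracket, L) \sqcap \Omega^{\PL{x}}_X(\tau)$ and concludes by induction over the (monotone) constraint system, which is the constraint-based rendering of your transfer-function/fixed-point argument; your explicit structural induction on $\tau$ and your treatment of the paths introduced by the disjunctive case merely fill in details the paper leaves implicit.
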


  \subsection{Translating Typings into Proofs}
    \label{sec:auto:typings2proofs}
    \label{sec:TI2HL} 
    \begin{definition}[Typing Proof]
 A \emph{typing proof} $\psi$ for a $\Verifier_X$-typing $\typingX$ of a statement $\StmtX$ is a minimal\footnote{all Hoare triples in $\psi$ must contribute to establishing the conclusion.} proof of the property $\{p\} S \{\mathit{true}\}$ for some precondition $p$ in Hoare logic for partial correctness that for every sub-expression $\ExprX$ of $\StmtX$ contains a Hoare triple $\{p_\ExprX\} \ExprX \{q_\ExprX\}$ with $\projection_X^\result(q_\ExprX) \sqsubseteq \typingX(\ExprX)$.
\end{definition}
Technically, $\psi$ only establishes soundness of the typing $\typingX$ (by being a Hoare logic proof and $\typingX_\psi \sqsubseteq \typingX$). However, when $\typingX_\psi \sqsubseteq \smin_S$, $\psi$ can be turned into a type safety proof by changing the proof system to Hoare logic for type-safe partial correctness and trivially establishing the type safety preconditions. Hence, typing proofs are well-suited as intermediate steps towards type safety proofs.

Recall that $\Verifier_X$-typings can be checked for soundness using a $\Verifier_X$-specific inference system. It is hence possible to extend $\typingassertion_X$ to mechanically derive a typing proof $\psi = \translation_X(\programX,\typingX)$ for a sound $\Verifier_X$-typing $\typingX$ for a program $\programX$ by translating the rules of this inference system into Hoare logic and establishing $\TInv_X(\typingX)$ as a global invariant.
In such a proof, each assertion $\AsrtX$ at program location $L$ is exactly the typing assertion $\typingassertion_X(\typingX,L)$. We hence write $\typingassertion_X(\psi,L) \defined \typingassertion_X(\typingX,L)$.

The interested reader may find an examplary translation for $\VEx$-typings in Appendix~\ref{app:typing_to_proof}.
Note that $\translation_X$ allows using automatically derived type information in Hoare logic proofs even in theorem proving environments trusting only propositions that they verified a proof for.


%
%


%

  \subsection{Two-Layered Proofs}
    \label{sec:auto:two_layered_proofs}
    \label{sec:auto:proof_patches} 
    \label{sec:HL2TI} 
    
A \emph{two-layered proof} is a Hoare logic proof for a type-safe notion of correctness of a dynamically-typed program, in which every assertion has the form $\TAsrtX \wedge \AsrtX$ for a typing assertion $\TAsrtX$ and an assertion $\AsrtX$. While $\AsrtX$ is user-editable, $\TAsrtX$ is meant to be created and modified solely by an automated type safety verifier. We refer to $\TAsrtX$ as the ``lower layer'' and $\AsrtX$ as the ``higher layer'' of the proof/assertion.
\begin{theorem}[Two-Layered Proof Construction]\label{thm:two-layered-proof-construction}
Given a typing proof $\phi_l$ and a proof $\phi_h$ for the same program $\pi$, it is always possible to construct a two-layered-proof $\phi$ with $\phi_l$ as lower and $\phi_h$ as higher layer.
\end{theorem}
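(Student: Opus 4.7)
The plan is to construct $\phi$ by traversing the program $\pi$ bottom-up and, at every sub-statement $S$, combining the Hoare triple provided by $\phi_l$ with the corresponding triple in $\phi_h$ via a simultaneous application of the conjunction and consequence rules. Since both $\phi_l$ and $\phi_h$ derive properties of the \emph{same} program, their proof trees mirror the syntax tree of $\pi$; the bulk of the work is aligning these two trees so that combination can proceed locally at each node.

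More concretely, I would first normalise $\phi_h$ and $\phi_l$ so that for every occurrence of a sub-statement $S$ there is exactly one corresponding Hoare triple $\{p_h\} S \{q_h\}$ in $\phi_h$ and exactly one $\{\tau^{\text{pre}}\} S \{\tau^{\text{post}}\}$ in $\phi_l$, inserting applications of the consequence rule where needed. Since the typing proof $\phi_l$ ultimately has conclusion $\{p_l\} \pi \{\mathit{true}\}$, its interior assertions are precisely the typing assertions $\tau$ supplied by the lower layer, while $\phi_h$ carries the user-level assertions $p$. I then proceed by structural induction on $S$: at each node I replace $\{p_h\} S \{q_h\}$ by $\{\tau^{\text{pre}} \wedge p_h\} S \{\tau^{\text{post}} \wedge q_h\}$, obtained by applying the same proof rule that $\phi_l$ and $\phi_h$ use at this position to the conjoined assertions. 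This works because all Hoare logic rules of Section~\ref{sec:setting:hoare-logic} are closed under conjunction of pre- and postconditions; in particular, the assignment, conditional, loop, method-call and recursion rules all preserve this conjunction thanks to distributivity of substitution and of $\wedge$ over $\vee$, and the side conditions of REC involving $\llbracket \PL{v}'_i \rrbracket \in \{\PL{C}_i\}$ are already discharged by $\phi_l$ and so remain valid in the combined proof.

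Two small complications need to be handled explicitly. First, when $\phi_l$ and $\phi_h$ apply different rules at the same syntactic position (for instance, $\phi_h$ packages several elementary steps into one consequence application while $\phi_l$ does not), I would expand the shorter derivation by inserting vacuous applications of consequence so that both trees match rule-by-rule; this is always possible because consequence is admissible. Second, the two-layered proof is required to be in Hoare logic for a \emph{type-safe} notion of correctness, whereas $\phi_h$ need not be. Here the decomposition rule of Definition~\ref{def:decomposition} is used at the root: since $\phi_l$ (viewed in the underlying type-unsafe partial-correctness system and reinterpreted with its type-safety preconditions trivially established, as discussed after Definition~\ref{def:decomposition}) proves $\{p_l\} \pi \{\mathit{true}\}$ in the sense of $\vdash_{\tau p}$, and the combined inner proof proves $\{\tau^{\text{pre}}_{\text{top}} \wedge p_h\} \pi \{\tau^{\text{post}}_{\text{top}} \wedge q_h\}$ in the type-unsafe sense, the decomposition rule yields the desired type-safe judgement.

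The main obstacle, as so often with proof-combination results, is the alignment step: one must carefully show that after inserting consequence applications the two proof trees are isomorphic on the sub-expression/sub-statement structure of $\pi$, and that every proof rule admits componentwise conjunction of assertions without violating its side conditions. The rules for recursion and for method invocation are the most delicate because they involve assumption contexts $A$ and fresh local variables; here I would argue that the freshness choices of $\phi_l$ and $\phi_h$ can be unified by alpha-renaming, and that the assumption contexts can simply be combined pointwise with conjunction, since $A$ consists of Hoare triples which enjoy the same closure property as their conclusions. Once this bookkeeping is settled, the inductive step is routine.
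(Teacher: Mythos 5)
Your proposal follows essentially the same route as the paper: the paper defines a ``fusion'' $\phi_h + \phi_l$ of the two proofs, aligns their tree structures by inserting neutral rule applications (your vacuous consequence steps), and then conjoins pre- and postconditions node by node using the fact that every axiom and rule of the logic is invariant under such componentwise conjunction. The only notable divergence in detail is that the paper explicitly discharges the side condition of the SUBST rule (arguing that variables in typing assertions are never substituted away) and the branching of CONJ/DISJ, while you instead spell out the final decomposition step to reach the type-safe correctness notion; neither difference changes the substance of the argument.
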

Starting from a typing proof $\translation_X(\pi,\typingX)$ in the lower layer and only $\mathit{true}$ in the higher layer, proofs in the higher layer are supported by type information from the lower layer (Section~\ref{sec:layer}). The type information may also be refined:
%
\begin{definition}[Refinement of Typing Proofs]
Let $\psi = \translation_X(\pi,\typingX)$ be a typing proof generated by a typing $\typingX$ of a program $\pi$. Then each conjunctive refinement step $\typingX \stackrel{\tau,L}{\rightarrow} \typingY$ gives rise to a \emph{conjunctive proof refinement step} $\psi \stackrel{\tau,L}{\rightarrow} \psi'$ with $\psi' = \translation_X(\pi,\typingY)$.
\end{definition}
Let $\psi_l = \translation_X(\pi,\typingX)$ be the lower layer of a two-layered proof $\psi$. Then whenever a typing literal appears within the higher layer $\AsrtX$ of an assertion at program location $L$ in $\psi$, the lower-layer proof $\psi_l$ is substituted by the result $\psi'_l$ of the conjunctive proof refinement step $\psi_l \stackrel{\projection_X(\AsrtX),L}{\rightarrow} \psi'_l$.
In such refinements $\typingassertion_X(\psi'_l,L')\allowbreak{ }\rightarrow\allowbreak{ }\typingassertion_X(\psi_l,L')$ holds for all $L'$ due to Theorem~\ref{thm:refinement_monotonicity}. Higher layer proof steps depending on lower layer information hence remain valid.

\section{Verifying the Evaluator Example}
  \label{sec:verifying_ev-example}
  To demonstrate how the techniques developed enable the convenient verification of dynamically typed programs despite hard typing problems, we will proof the evaluator example both type-safe and correct.
Figure~\ref{fig:correctness-ev-example} shows all annotations\footnote{Again, all recursive predicates can instead be expressed using quantification over sequences, at the expense of readability} necessary to prove that \verb|calc()| derives a given term's value.
\begin{figure}
\verb|def |$\mathit{valuetree2}(t,n) \defined \mathbb{N}(t[0],\widehat{\PL{VALUE}}) \wedge \mathbb{N}(t[1],n),$ \hfill $\mathit{valuetree1}(t) \defined \exists n \quantsep \mathit{valuetree2}(t,n)$

\verb|def |$\mathit{vartree3}(t,e,x) \defined \mathbb{N}(t[0],\ProgVirtual{\PL{VAR}}) \wedge \mathbb{S}(t[1],x) \wedge e[x] \not= \Vnull$

\verb|def |$\mathit{vartree2}(t,e) \defined \exists x \quantsep \mathit{vartree3}(t,e,x)$

\verb|def |$\mathit{optree5}(t,e,\mathit{op},l,r) \defined \mathbb{N}(t[0],\ProgVirtual{\PL{OP}}) \wedge \mathbb{N}(t[1],\mathit{op}) \wedge \mathbb{L}(t[2],l) \wedge \mathit{parsetree2}(l,e) \wedge \mathbb{L}(t[3],r) \wedge \mathit{parsetree2}(r,e)$

\verb|def |$\mathit{optree2}(t,e) \defined \exists \mathit{op},l,r \quantsep \mathit{optree5}(t,e,\mathit{op},l,r)$

\verb|def |$\mathit{parsetree2}(t,e) \defined \mathit{valuetree1}(t) \vee \mathit{vartree2}(t,e) \vee \mathit{optree2}(t,e)$


{\color{blue}
\verb|def |$\mathit{treeval3}(\mathit{tree},\mathit{env},n) \defined$

\verb|     |$\mathit{valuetree2}(\mathit{tree},n) \;\; \vee$

\verb|     |$\exists x \quantsep \mathit{vartree3}(\mathit{tree},\mathit{env},x) \wedge \mathit{env}[x] = n \;\; \vee$

\verb|     |$\exists \mathit{op},l,r,n_l,n_r \quantsep \mathit{optree5}(\mathit{tree},\mathit{env},\mathit{op},l,r) \wedge \mathit{treeval3}(l,\mathit{env},n_l) \wedge \mathit{treeval3}(r,\mathit{env},n_r) \; \wedge$

\verb|       |$\mathit{op} = \ProgVirtual{\PL{ADD}} \rightarrow n = n_l + n_r \;\; \wedge ...$}



\verb|inv  |$\forall t,e \quantsep \mathit{parsetree2(t,e)} \rightarrow$

\verb|       |$t[0] = \ProgVirtual{\PL{VALUE}} \rightarrow \mathit{valuetree1(t)} \wedge t[0] = \ProgVirtual{\PL{VAR}} \rightarrow \mathit{vartree2(t,e)} \wedge t[0] = \ProgVirtual{\PL{OP}} \rightarrow \mathit{optree2(t,e)}$

\verb||

\verb|  |$\{ \mathit{parsetree2}(\ProgVirtual{\PL{tree}}, \ProgVirtual{\PL{env}}) \}$

\verb| 1  method calc(env, tree = @tree) {|

\verb| 2    if tree[0] = VALUE then | $\{\mathit{valuetree1(\ProgVirtual{\PL{tree}})}\}$ \verb| tree[1]|

\verb| 3    elseif tree[0] = VAR then | $\{\mathit{vartree2}(\ProgVirtual{\PL{tree}},\ProgVirtual{\PL{env}})\}$ \verb| env[tree[1]]|

\verb| 4    elseif tree[0] = OP then | $\{\mathit{optree2}(\ProgVirtual{\PL{tree}},\ProgVirtual{\PL{env}})\}$

\verb| 5      if tree[1] = ADD then calc(env, tree[2]) + calc(env, tree[3])|

\verb| 6      elseif ...|

\verb| 7      else nil|

\verb| 8      fi|

\verb| 9    else nil|

\verb|10    fi|

\verb|11  }|

\verb|  |{\color{blue}$\{ \mathit{treeval3}(\ProgVirtual{\PL{tree}}, \ProgVirtual{\PL{env}}, \ProgVirtual{\result}) \}$}
  \caption{\label{fig:correctness-ev-example} Correctness proof for the evaluator example}
\end{figure}
\todo{make sure the example matches its initially presented version}
\myparagraph{Type safety:} The given invariant enables deriving the assertions on lines 2-4 and hence a proper typing of the remaining program. As a property of the ad-hoc data structure it must be established in the (omitted) method \verb|parse()|. With the types of \verb|env| ($\mathbb{S} \mapsto \mathbb{N}$) and $\result$ ($\mathbb{N}$) known, their mapping can be automated. The complex ad-hoc data structure \verb|tree| is given the (imprecise) type $\mathbb{N} \mapsto \mathbb{O}$ and its elements hence need manual mapping. These mappings are encapsulated in predicates ($\mathit{valuetree2}, \mathit{vartree3}, \mathit{optree5}$) and furthermore ignored.

\myparagraphB{Correctness {\color{blue}(marked)}:} The lower layer information allows identifying numerous pure expressions (\verb|tree[1]|, \verb|env[tree[1]]|, \verb|tree[1] == ADD|, etc.). Establishing the specified property for the first two branches then only requires applying PURE EXPR. The conditional in line 5 can be handled by PURE COND. Since all arguments to the recursive method calls in that line are also pure, $\{\mathit{optree2}(\ProgVirtual{\PL{tree}},\ProgVirtual{\PL{env}}) \wedge \ProgVirtual{\PL{tree}}[1] = \ProgVirtual{\PL{ADD}} \} \allowbreak \verb| calc(env, tree[2]) + calc(env, tree[3])| \allowbreak \{\mathit{treeval3(\ProgVirtual{\PL{tree}},\ProgVirtual{\PL{env}},\result)}\}$ can be derived automatically. Note that all implications can be handled by SMT solvers with theories for Presburger arithmetic and lists for which effective decision procedures are known \todo{cite them} and do not require reasoning about graph-like object structures \todo{check}.


\section{Related Work}
  \label{sec:related_work}
%
%
%
%

There are several threads of related work regarding dynamically typed programs. In each, we can only discuss those works most closely related to ours.

%

\myparagraphB{Type Safety:}
Cartwright \cite{Cartwright91softTyping} pioneered a strand of work called ``soft typing'', applying automated type safety verifiers to dynamically typed languages with the aim of improving performance.
Another line of work is ``gradual typing'' \cite{SiekTaha2007,BiermanAbadi2014UTypeScript}, letting the user decide which parts of the program should be statically checked for type errors, while dynamically typing the remaining program. \footnote{The two may also be combined \cite{Rastogi2012GradualTypeInference}}

Both soft- and gradual typing require rewriting of program parts in exchange for type safety guarantees. On the contrary, our approach is able to provide such guarantees also for parts that are not statically typable. Also, the user is free to omit type safety from the specification (dynamic typing) and may still rewrite the program to allow automatic checking (static typing), both on a per-expression basis (gradual). With respect to correctness, our approach hence subsumes both soft and gradual typing. However, it does not (yet) increase performance.



Others \cite{Guha2011TypingLocalControl,Lerner2013TeJaS}, have extended such abstraction-based verifiers to handle many ideoms common in dynamically typed languages.

%

\myparagraphB{Correctness:}
To the best of our knowlegdge, \cite{Gardner12programlogicJS} currently is the only\footnote{\cite{QinEtAl2011} only treat partial correctness. Also, they restrict the programming language to allow a form of (type-unsafe) pure expressions}
axiomatic semantics for a type-safe notion of correctness of a dynamically typed language.
Like discussed in Section~\ref{sec:overview:stat-dyn-hoare-logic} it uses type safety preconditions, considers all variables to be of object type and does not use pure expressions and would thus benefit from our approach.
\todo{find another \HLd}

Nguyen
et al. \cite{NguyenTH2013SoftContract} proposed an automatic contract verifier for untyped higher-order functional languages based on symbolic execution inserting run-time checks for contracts it cannot statically guarantee. Since they use a mechanism similar to widening to enforce termination, their apporach also combines abstraction-based and symbolic reasoning.

Drawing on their work on the verification of untyped higher-order functional programs \cite{Chugh2012NestedRef}, Chugh et al. \cite{Chugh2012DJS,Chugh2013PhD} provide a dependent-type system for an untyped functional ``core calculus'' $\lambda_{JS}$ JavaScript programs can be translated into. No soundness is demonstrated for their system.

Swamy et al. \cite{Swamy2013DijMonad} semi-automatically reason about a wide range of JavaScript ideoms by translating into the dependently-typed functional language $F^*$ and using its SMT-based reasoning engine. 
They also noticed that the type information generated by an {ab\-strac\-tion}-based type safety verifier (GateKeeper in their case) are useful to improve the effectiveness of automatic reasoning engines. However, they did not feed the symbolically derived proof results back into GateKeeper and did not use the type information to ease the annotation burden for their users.
Since their main focus lies on a novel encoding of Dijkstra's predicate transformer semantics allowing $F^*$'s dependent type inference to effectively reason about imperative programs in a style similar to Hoare logic, we consider the approaches to be largely complementary.

In general, all fully automatic approaches \cite{Cartwright91softTyping,NguyenTH2013SoftContract,Guha2011TypingLocalControl,Chugh2012NestedRef} are necessarily incomplete. They can however be used as automatic type safety verifiers.
Furthermore, all purely symbolic approaches \cite{Chugh2012NestedRef,Chugh2012DJS,Swamy2013DijMonad,Gardner12programlogicJS,QinEtAl2011} require all type information to be manually specified in method contracts and loop invariants.


%
%

\todo{add Dminor (Functional untyped languages)}

Both the idea and the term ``Layer of abstraction'' are
inspired by the work of Gardner et al. \cite{Gardner12programlogicJS} 
on reasoning about JavaScript. However, their work abstracts from the peculiarities of
the JavaScript variable store, while ours abstracts from the complexity of dynamic typing and
is applicable to virtually any dynamically typed language.

The decomposition rule used to establish the layer of abstraction is inspired by similar constructions in \cite{AptDeBoerOlderogBook2009}.

Some tools for verification of statically typed imperative programs \cite{DarvasLeino2007} allow using a ``pure'' subset of the programming language (that is side-effect-free and guaranteed to terminate) within assertions. The ability of our layer of abstraction to allow the use of well-typed ``pure'' program expressions in assertions can be seen as an extension of this idea to dynamically typed programs.

\myparagraphB{Combining Static Analysis with Program Logics:}
There has been a considerable amount of work on integrating algorithmic decision procedures (mostly
model checking) and deductive methods for program verification (See \cite{Uribe2000} for pointers).
Due to the deep connection between data flow analysis and model checking
\cite{SchmidtSteffen1998}, many of these techniques can be considered as related.

Note that our conjunctive refinement differs from abstraction refinement since it is not the abstraction that is refined, but the analysis result.

Also, translations from typings (f.i. from type systems for information-flow properties) to program logics are commonly used in the Proof-Carrying Code Community \cite{Hamid2004Interfacing} to avoid the need for property-specific proof-checkers. Although PCC is a completely different application area, their aim was also to integrate results derived by different inference systems into one common representation -- and incidently they also chose a program logic as their ``lingua franca''.

A closely related proposal also integrating symbolic with abstraction-based reasoning is MIXY\cite{PhangKhoo2010Mixing}, a framework for mixing symbolic execution with type checking. In their system, the user divides his/her program into s-blocks and t-blocks. While s-blocks are analysed using symbolic execution, type analysis is applied to t-blocks. The results of both analyses are bidirectionally exchanged using so-called MIX-rules: Type analysis results are translated into a matching start environment for symbolic execution and types ensured by (exhaustive!) symbolic execution can be used for type analysis. Also, the aim is related: What Phang et. al called ``balancing precision vs. efficiency'' is the same as ``combining automation with completeness'', although Phang et al. do not proof their system complete. Our approach could most likely be integrated into their framework as ``Hoare-Logic blocks'' $\{_h \; e \; {}_h\}$ with typing $\varGamma \vdash \{_h \; e \; {}_h\}: \tau$ for which a Hoare-triple $\{p_e\} e \{q_e\}$ must be derived where $p_e \defined \typingassertion_X(\varGamma)$ and $\projection^\result_X(q_e) = \tau$ for some verifier $\Verifier_X$.

\section{Conclusion \& Future Work}
  \label{sec:conclusion}
  
The approach presented allows verifying dynamically typed programs just like statically typed ones, requiring manual assistance only on hard typing problems. If a program is statically-typable, there is no difference. Otherwise, the user may freely choose whether going beyond the limits of the type system is worth the verification effort.
While the stated requirements for automated verifiers allow conveniently using the technique, more powerful verifiers can be expected to significantly increase the degree of automation.
Being gradually applicable like gradual typing and automated like soft typing, the approach allows deriving type safety guarantees also for code parts that cannot be startically typed.
Should it turn out to be practically usable, it would suggest dynamic typing as a serious alternative to static typing for verifiable languages that have all necessary infrastructure readily available. Until then, there are several useful ways to extended this work:
\myparagraph{Completeness:}
Currently, there is no (relative) complete Hoare logic for a dynamically typed language. While the approach is also applicable to incomplete program logics, no completeness guarantee can be provided in this case.
\myparagraph{Other Program Logics:}
\cite{Chugh2012DJS,Swamy2013DijMonad} are both based on refinement types. Recently, it has been shown \cite{Unno2013Automated} how to extend such systems to provide (relative) completeness like Hoare logic. It would be interesting to investigate if our approach is also applicable to such program logics.
\myparagraph{Other Program Analyses:} The formalization of semi-automation suggests that it could be generalizable to arbitrary data flow analyses.
\myparagraph{Performance:} The derived type information could be used to omit run-time checks and generate more efficient binaries.
\myparagraph{Features:}
The current concept excludes optional variables and type errors as exceptions. Also, closures and advanced dynamic features like method update, dynamic type hierarchies and eval() should be studied.
\myparagraph{Implementation:}
An implementation would allow evaluating the practical usefulness of the concept.
\todo{functional programs: type inference + refinement types}
\todo{extend to other program analyses (termination or alias/points-to-analysis)}
\todo{use the precise type information also to improve performance by eliminating runtime type checks. Also: intervals can be used to choose the right representations of numerals, fixed-size lists / strings can be represented as arrays,...}
%
%
%
%
\paragraph{Acknowledgements}
\ifx\submission\true
-- omitted for submission --
\else
We thank Sven Linker, Martin Hilscher and Eike Best for insightful discussions and useful comments on drafts of this paper.
\fi



%

\todo{embed the bibliography}
\bibliographystyle{splncs03}
\bibliography{common}

\appendix
\section{Appendix: Identifying and Translating Pure Expressions}
  \label{app:pure_expressions}
  \paragraph{Identifying pure expressions} ($\ppure: \mathcal{T}_s \times \mathit{Expr_d} \mapsto \mathbb{B}$)

$\ppure(e) \defined \tau(e) \text{ defined}$, \hfill $\tau(e) = \mathit{inf}(\{ \BaseTypeX \mid \ppure(\BaseTypeX,e) \})$

$\ppure(\mathit{Null}, \PL{null}) \defined \mathit{true}$, \hfill $\forall e \in \mathit{Expr}_d \quantsep \ppure(\mathit{Null}, e) \rightarrow \ppure(\mathbb{O}, e)$

$\ppure(\BaseTypeX, \VarLX) \defined \llbracket \VarLX \rrbracket \in \UnionTypeX \wedge \Psi(\UnionTypeX) = \BaseTypeX \wedge \safe_\BaseTypeX(\VarLX)$ (includes the case of $\this{}$)

$\ppure(\BaseTypeX, \VarIX) \defined \llbracket \this{}.\VarIX \rrbracket \in \UnionTypeX \wedge \Psi(\UnionTypeX) = \BaseTypeX \wedge \safe_\BaseTypeX(\VarIX)$

$\ppure(\mathbb{B}, e \PL{ is\_a? C}) \defined \ppure(\mathbb{O}, e)$

$\ppure(\mathbb{B}, e_1 == e_2) \defined \exists \BaseTypeX \in \mathcal{T}_s \quantsep \ppure(\BaseTypeX, e_1) \wedge \ppure(\BaseTypeX,e_2)$

$\ppure(\BaseTypeX, e_0.\PL{m}(e_1,...,e_n)) \defined$

\hspace{1cm} $\exists \BaseTypeX_0,...,\BaseTypeX_n \quantsep \ppure(\BaseTypeX_i, e_i)$ for $i \in \mathbb{N}_n \wedge \Psi(\BaseTypeX_0.\PL{m}(\BaseTypeX_1,...,\BaseTypeX_n) \rightarrow \BaseTypeX)$ defined

$\ppure(\BaseTypeX, \PL{new C}(e_1,...,e_n)) \defined$

\hspace{1cm} $\exists \BaseTypeX_1,...,\BaseTypeX_n \quantsep \ppure(\BaseTypeX_i, e_i)$ for $i \in \mathbb{N}^1_n \wedge \Psi(\Psi(\{C\}).\PL{init}(\BaseTypeX_1,...,\BaseTypeX_n) \rightarrow \BaseTypeX)$ defined

$\ppure(\BaseTypeX, \PL{u} := e) \defined \mathit{false}$

$\ppure(\BaseTypeX, \PL{@v} := e) \defined \mathit{false}$

$\ppure(\BaseTypeX, \PL{while } \ExprX \PL{ do } \StmtX \PL{ od}) \defined \mathit{false}$

$\ppure(\BaseTypeX, \PL{if } \ExprX \PL{ then } \StmtX_1 \PL{ else } \StmtX_2 \PL{ fi}) \defined \ppure(\mathbb{B},\ExprX) \wedge \StmtX_i \equiv \ExprX_i \wedge \ppure(\BaseTypeX, \ExprX_i)$ for $i \in \{1,2\}$

\paragraph{Translation of pure expressions into logical expressions} ($\Psi: \mathcal{T}_s \times \mathit{Expr}_d \mapsto \mathit{LExp}$)

$\Psi(e) \defined \Psi_{\tau(e)}(e)$, $\Psi_\BaseTypeX(\VarX) \defined \VarXVirtual$

$\Psi_{\mathit{Null}}(\PL{null}) \defined \PL{null}$, $(\Psi_{\mathit{Null}}(e) = p) \rightarrow (\Psi_{\mathbb{O}}(e) = p)$

$\Psi_\BaseTypeX(e_0.\PL{m}(e_1,...,e_n)) \defined l[v_0,...,v_n := \Psi_{\BaseTypeX_0}(e_0), ..., \Psi_{\BaseTypeX_n}(e_n)]$

where $\ppure(\BaseTypeX_i, e_i)$ for $i \in \mathbb{N}_n$ and $\Psi(\BaseTypeX_0.\PL{m}(\BaseTypeX_1,...,\BaseTypeX_n) \rightarrow \BaseTypeX) = l$.

$\Psi_\BaseTypeX(\PL{new C}(e_1,...,e_n)) \defined l[v_1,...,v_n := \Psi_{\BaseTypeX_1}(e_1), ..., \Psi_{\BaseTypeX_n}(e_n)]$

where $\ppure(\BaseTypeX_i, e_i)$ for $i \in \mathbb{N}^1_n$ and $\Psi(\Psi(\{C\}).\PL{init}(\BaseTypeX_1,...,\BaseTypeX_n) \rightarrow \BaseTypeX) = l$.

$\Psi_{\mathbb{B}}(e_1 == e_2) \defined \Psi_\BaseTypeX(e_1) = \Psi_\BaseTypeX(e_2)$ where $\ppure(\BaseTypeX, e_i)$ for $i \in \{1,2\}$.

$\Psi_{\mathbb{B}}(e \PL{ is\_a? C}) \defined \llbracket \Psi_{\mathbb{O}}(e) \rrbracket \in \{\PL{C}\}$

$\Psi_\BaseTypeX(\PL{if } e \PL{ then } e_1 \PL{ else } e_2 \PL{ fi}) \defined \PL{if } \Psi_{\mathbb{B}}(e) \PL{ then } \Psi_\BaseTypeX(e_1) \PL{ else } \Psi_\BaseTypeX(e_2) \PL{ fi}$

\paragraph{Proof Rules for pure expressions}
\hspace{0.1cm}\newline

\noindent RULE: PURE LOOP (\StrongPC{strong} \TypesafePC{type-safe} partial correctness)
\begin{center}
\AxiomC{$\{p \wedge \Psi(e) \} S \{p\}$}
\RightLabel{where $\ppure(e)$, $\tau(e) = \mathbb{B}$}
\UnaryInfC{$\{p\}$ while $e$ do $S$ od $\{p \wedge \neg \Psi(e) \wedge \result = \Vnull\}$}
\DisplayProof
\end{center}

\noindent RULE: PURE METH
\begin{center}
\AxiomC{$\begin{matrix}
          \{\AsrtX\} \VarLX_0.\MethodX(\VarLX_1,...,\VarLX_n) \{\AsrtY\}
         \end{matrix}$}
       
\UnaryInfC{$\{\AsrtX[\VarLXVirtual_0,...,\VarLXVirtual_n := \Psi(\ExprX_0),...,\Psi(\ExprX_n)]\} \ExprX_0.\MethodX(\ExprX_1,...,\ExprX_n) \{\AsrtY\}$}
\DisplayProof
\end{center}
where $\VarLX_i \in \Var_L$ fresh and $\ppure(e_i)$ for all $i \in \mathbb{N}_n$.

\paragraph{Soundness}

\begin{proof}
 The Axiom PURE EXPR can be established by induction over the structure of the pure expression $\ExprX$, using the guarantees provided by $\ppure(\ExprX)$. In the cases for variables, $\ppure(\VarX)$ implies $\mathit{safe}_\BaseTypeX(\VarX)$ for some type $\BaseTypeX$. In the case for method calls, we assume
 \[ \{\AsrtX[\hat{\result} := \Psi(\BaseTypeX_0.\MethodX(\BaseTypeX_1,...,\BaseTypeX_n)\rightarrow \BaseTypeX]\} \VarLX_0.\MethodX(\VarLX_1,...,\VarLX_n) \{\AsrtX\} \]
 with $\BaseTypeX_i = \tau(\VarLX_i)$ for all $i \in \mathbb{N}_n$ to be established for all methods in $\Method_{\pure}$ (which is precisely the meaning of ``correspondence between methods and operations with respect to the mapping $\Psi$'' in Section~\ref{sec:layer:pure_expressions}).

 The rules PURE ASGN, PURE COND, PURE LOOP and PURE METH can then be derived by combining the axiom PURE EXPR with the \textbf{dyn} rules for ASGN, COND, LOOP and METH respectively. \hfill $\qed$
\end{proof}

\section{Appendix: Axiomatic Semantics for \textbf{dyn}}
  \label{app:hoare_logic}
  \todo{adapt rules to new naming of syntax elements}

\vspace{0.1cm}
\noindent AXIOM: CONST

\begin{center}
$\{p[\result := \Vnull]\} \PL{null} \{p\}$
\end{center}

\noindent AXIOM: VAR

\begin{center}
$\{p[\result := \PL{u}]\} \PL{u} \{p\}$
\end{center}

Note: includes the case of $\PL{u} \equiv \this{}$.

\noindent AXIOM: IVAR

\begin{center}
$\{p[\result := \this{}\PL{.@v}]\} \PL{@v} \{p\}$
\end{center}

\noindent RULE: ASGN (both normal and instance variables)
\begin{center}
\AxiomC{$\{p\} e \{q[\PL{u} := \result] \}$}
\RightLabel{where $\PL{u} \in \Var$}
\UnaryInfC{$\{p\} \PL{u} := e \{q\}$}
\DisplayProof
\end{center}
\ifx \version\extended
\noindent RULE: SEQ

\begin{center}
\AxiomC{$\{p\} S_1 \{r\}$}
\AxiomC{$\{r\} S_2 \{q\}$}
\BinaryInfC{$\{p\} S_1; S_2 \{q\}$}
\DisplayProof
\end{center}
\fi

\noindent RULE: Conditional (COND) (\StrongPC{strong} \TypesafePC{type-safe} partial correctness)

  \begin{center}
  \AxiomC{$\begin{matrix}
	    \{p\} e \{r \wedge \mathit{bool\_test} \} \\
	    \{r \wedge b\} S_1 \{q\} \\
	    \{r \wedge \neg b\} S_2 \{q\}
	  \end{matrix}$}
  \UnaryInfC{$\{p\}$ if $e$ then $S_1$ else $S_2$ fi $\{q\}$}
  \DisplayProof
  \end{center}
\hspace{0.1cm}
where $b$ is a predicate and $\mathit{bool\_test} \equiv \mStrongPC{\result \not= \Vnull} \wedge \mTypesafePC{\llbracket \result \rrbracket \in \{bool\}} \wedge \mathbb{B}(\result,b)$

\noindent RULE: LOOP (\StrongPC{strong} \TypesafePC{type-safe} partial correctness)
\begin{center}
\AxiomC{$\begin{matrix}
          \{p\} e \{p \wedge r \wedge \mathit{bool\_test} \} \\
          \{p \wedge r \wedge b \} S \{p\}
         \end{matrix}$}
\UnaryInfC{$\{p\}$ while $e$ do $S$ od $\{p \wedge \neg b \wedge \result = \Vnull\}$}
\DisplayProof
\end{center}
where $b$ is a predicate and $\mathit{bool\_test}$ is defined as in COND.

\noindent RULE: CONS

\begin{center}
\AxiomC{$p \rightarrow p_1, \{p_1\} S \{q_1\}, q_1 \rightarrow q$}
\UnaryInfC{$\{p\} S \{q\}$}
\DisplayProof
\end{center}

\noindent RULE: PASGN
\begin{center}
$\{p[\Vector{\PL{u}} := \Vector{t}]\} \Vector{\PL{u}} := \Vector{t} \{p\}$
\end{center}
where $\{\Vector{\PL{u}}\} \subseteq \Var_L$ and $\{\Vector{t}\} \subseteq \Var_L \cup \{\PL{null}\}$.

\noindent RULE: BLCK
\begin{center}
\AxiomC{$\{p\} \Vector{\PL{u}}\Vector{\overline{\PL{u}}} := \Vector{t}\Vector{\PL{null}}; S \{q\}$}
\RightLabel{when $(\Var_L \setminus \{\result\}) \cap \free(q) = \emptyset$}
\UnaryInfC{$\{p\}$ begin local $\Vector{\PL{u}} := \Vector{t}; S \text{ end} \{q\}$}
\DisplayProof
\end{center}
where $\{\Vector{\PL{u}}\} \subseteq \Var_L$, $\{\Vector{t}\} \subseteq \Var_L \cup \{\PL{null}\}$,
$\{\Vector{\overline{\PL{u}}}\} = \Var_L \setminus (\{\Vector{\PL{u}}\} \cup \Var_S)$ and $|\Vector{\PL{null}}| = |\Vector{\overline{\PL{u}}}|$.

\todo{think about: either only allow local variables instead of $t$ in rules BLCK and PASGN or also allow ``terms'' in REC}

\noindent RULE: METH
\begin{center}
\AxiomC{$\begin{matrix}
          \{\AsrtX_i\} \ExprX_i \{\AsrtX_{i+1}[\VarLX_i := \result] \}\text{ for }i \in \mathbb{N}_n \\
          \{\AsrtX_{n+1}\} \VarLX_0.\MethodX(\VarLX_1,...,\VarLX_n) \{\AsrtY\}
         \end{matrix}$}
       
\UnaryInfC{$\{\AsrtX_0\} \ExprX_0.\MethodX(\ExprX_1,...,\ExprX_n) \{\AsrtY\}$}
\DisplayProof
\end{center}
where $\VarLX_i \text{ fresh}, \VarLX_i \in \Var_L, \VarLX_i \not\in var(\ExprX_j) \cup change(\ExprX_j)$ for all $i,j \in \mathbb{N}_n$.

\noindent RULE: REC
(\StrongPC{strong} \TypesafePC{typesafe} partial correctness)
\begin{center}
\AxiomC{$\begin{matrix}
          A \vdash \{p\} S \{q\}, \\
          A \vdash \{p_i\} \PL{begin local this}, \Vector{\PL{u}_i} := \PL{v}'_i,\Vector{\PL{v}_i}; S_i \text{ end} \{q_i\}, i \in \mathbb{N}^1_n \\
          p_i \rightarrow (\mStrongPC{\PL{v}'_i \not= \Vnull} \wedge \mTypesafePC{\PL{v}'_i \not= \Vnull \rightarrow \PL{v}'_i.@\mathbf{c} = \oC_{\PL{C}_i}}), i \in \mathbb{N}^1_n
         \end{matrix}$}
       
\UnaryInfC{$\{p\} S \{q\}$}
\DisplayProof
\end{center}
where method $\PL{m}_i(\Vector{\PL{u}_i}) \{ S_i \} \in \Method_{\PL{C}_i}$ and $A \equiv \{p_1\} \PL{v}'_1.\PL{m}_1(\Vector{\PL{v}_1}) \{q_1\}, ..., \{p_n\} \PL{v}'_n.\PL{m}_n(\Vector{\PL{v}_n}) \{q_n\}$.

%
%
%
\noindent RULE: CNSTR
\begin{center}
\AxiomC{$\{p\} \mathbf{new}_\PL{C}.init(\Vector{e}) \{q\}$}
\UnaryInfC{$\{p\} \PL{new C}(\Vector{e}) \{q\}$}
\DisplayProof
\end{center}

\noindent AXIOM: NEW
\begin{center}
$\{p[\result := \mathbf{new}_\PL{C}]\} \mathbf{new}_\PL{C} \{p\}$
\end{center}

\ifx \version\extended
\subsection{Auxiliary Rules}
\label{sec:aux_rules}

\noindent RULE: DISJ

\begin{center}
\AxiomC{$\{p\} S \{q\}$}
\AxiomC{$\{r\} S \{q\}$}
\BinaryInfC{$\{p \vee r\} S \{q\}$}
\DisplayProof
\end{center}

\noindent RULE: CONJ

\begin{center}
\AxiomC{$\{p_1\} S \{q_1\}$}
\AxiomC{$\{p_2\} S \{q_2\}$}
\BinaryInfC{$\{p_1 \wedge p_2\} S \{q_1 \wedge q_2\}$}
\DisplayProof
\end{center}

\noindent RULE: $\exists$-INT

\begin{center}
\AxiomC{$\{p\} S \{q\}$}
\UnaryInfC{$\{\exists x. p\} S \{q\}$}
\DisplayProof
\end{center}
where $x \not\in var(\Method) \cup var(S) \cup \free(q)$.

\noindent RULE: INV

\begin{center}
\AxiomC{$\{r\} S \{q\}$}
\UnaryInfC{$\{p \wedge r\} S \{p \wedge q\}$}
\DisplayProof
\end{center}
where $\free(p) \cap (change(\Method) \cup change(S)) = \emptyset$ and $p$ does not contain quantification
over objects.
\todo{explain $change(\Method) \subseteq \Var_I$}

\noindent RULE: SUBST
\todo{think about: are the $z$ local/instance variables?}

\begin{center}
\AxiomC{$\{p\} S \{q\}$}
\UnaryInfC{$\{p[\Vector{z} := \Vector{t}]\} S \{q[\Vector{z} := \Vector{t}]\}$}
\DisplayProof
\end{center}
where $var(\Vector{z}) \cap (var(\Method) \cup var(S)) = var(\Vector{t}) \cap (change(\Method) \cup change(S)) = \emptyset$.
\fi


\section{Appendix: Automatic Type Safety Verifier $\VEx$}
  \label{app:verifier_VEx}
  To allow for intra-procedural flow-sensitivity, all statements $S$ are converted to static single-assignment form (SSA) \todo{reference} for local- as well as instance variables of the current object. This necessitates that each occurrence of such a variable $\PL{x}$ having some number of assignments, say $n$, is replaced by one of its $k \ge n$ ``versions'' $\PL{x}_1, ..., \PL{x}_k$ such that each version has exactly one assignment dominating all its occurrences (except $\phi$-occurrences\footnote{for $k - n$ versions, so-called $\phi$-assignments are inserted at control-flow joins. The occurrences in such assignments need not be dominated.}). We maintain a mapping $\Fversion(\PL{x},L)$ from variables $\PL{x} \in \Var$ and program locations $L$ of $\pi$ to the version $\PL{x}_i$ whose assignment dominates this location.

Next, each sub-statement $S$ of the program $\pi$ is given a type variable $\llbracket S \rrbracket$. For each
method $\PL{C.m}$, additional type variables $\llbracket P^{\PL{C.m}}_i \rrbracket$ for each of its parameters, $\llbracket R^{\PL{C.m}} \rrbracket$ for its return value and $\llbracket V^{\PL{C.m}}_{\PL{u},i} \rrbracket$ for each version $\PL{u}_i$ of each local variable $\PL{u}$ are added. Instance variables $\PL{@v} \in \Var_{\PL{C}}$ are given both a global type variable $\llbracket V_{\PL{C.@v}} \rrbracket$ and  type variables $\llbracket V^{\PL{C.m}}_{\PL{C.@v},i} \rrbracket$ for each version $\PL{@v}_i$ in methods $\PL{C.m}$ using them.

The constraint generation then proceeds as follows: for every method declaration $\PL{method m}(\PL{u}_1,...,\PL{u}_n) \{ S_{\PL{C.m}} \}$ of class $\PL{C}$ we generate the following constraints:

\vspace{0.2cm}
\begin{tabular}{lcl}
$\llbracket P^{ \PL{C.m}}_i \rrbracket \constraint \llbracket \PL{u}_i \rrbracket$, & \hspace{1cm} $\llbracket S_{\PL{C.m}} \rrbracket \constraint \llbracket R^{\PL{C.m}} \rrbracket$, \hspace{1cm} & $\llbracket V_{\PL{C.@v}} \rrbracket \constraint \llbracket V^{\PL{C.m}}_{\PL{C.@v},0} \rrbracket$ \\
\hspace{0.5cm} for all $i \in \mathbb{N}^1_n$ & & \hspace{0.5cm} for all $\PL{@v}$ used in $\PL{C.m}$ \\
\end{tabular}
\vspace{0.2cm}

\noindent Additionally, we traverse the parse tree of the body $S_{\PL{C.m}}$ of the method $\PL{C.m}$ applying the rules given in Figure~\ref{fig:typing-rules}.

\begin{figure}
In the body of a method $\PL{C.m}$:

\begin{tabular}{rllrll}
$\Vnull$ & $\Longrightarrow$ & $\llbracket \Vnull \rrbracket \constraintequal \{\}$ & $\PL{u}_i := e$ & $\Longrightarrow$ & $\llbracket e \rrbracket \constraint \llbracket \PL{u}_i \rrbracket \wedge \llbracket \PL{u}_i := e \rrbracket \constraintequal \llbracket e \rrbracket$ \\

$\PL{u}_i$ & $\Longrightarrow$ & $\llbracket \PL{u}_i \rrbracket \constraintequal \llbracket V^{C.m}_{\PL{u},i} \rrbracket$ & $\PL{@v}_i := e$ & $\Longrightarrow$ & $\llbracket e \rrbracket \constraint \llbracket \PL{@v}_i \rrbracket \constraint \llbracket V_{\PL{C.@v}} \rrbracket \wedge \llbracket \PL{@v}_i := e \rrbracket \constraintequal \llbracket e \rrbracket$ \\

$\PL{@v}_i$ & $\Longrightarrow$ & $\llbracket \PL{@v}_i \rrbracket \constraintequal \llbracket V^{\PL{C.m}}_{\PL{C.@v},i} \rrbracket$ \hspace{0.5cm} & $e.\PL{m}(e_1, ... ,e_n)$ & $\Longrightarrow$ & \precision{$\llbracket e \rrbracket \constraint [\PL{m}(\llbracket e_1 \rrbracket,...,\llbracket e_n \rrbracket) \rightarrow \llbracket e.\PL{m}(...) \rrbracket]$} \\

$\this{}$ & $\Longrightarrow$ & $\llbracket \this{} \rrbracket \constraintequal \{\PL{C}\}$ & $S; e$ & $\Longrightarrow$ & $\llbracket S; e \rrbracket \constraintequal \llbracket e \rrbracket$ \\
\end{tabular}

\begin{tabular}{rll}
if $e$ then $S_1$ else $S_2$ fi & $\Longrightarrow$ & $\precision{\llbracket e \rrbracket \constraint \{\mathit{bool}\}}\wedge \llbracket S_1 \rrbracket \constraintunion \llbracket S_2 \rrbracket \constraint \llbracket \PL{if }...\PL{ fi}\rrbracket$ \\

while $e$ do $S$ od & $\Longrightarrow$ & $\precision{\llbracket e \rrbracket \constraint \{\mathit{bool}\}} \wedge \llbracket $while $...$ od$\rrbracket = \{\}$ \\

new $C'(e_1,...,e_n)$ & $\Longrightarrow$ & $\{C'\} \constraint [\PL{init}(\llbracket e_1 \rrbracket, ..., \llbracket e_n \rrbracket)\rightarrow \llbracket $new $ C'(...) \rrbracket]$ \\

$\phi(e_1,...,e_n)$ & $\Longrightarrow$ & $\llbracket e_1 \rrbracket \sqcup ... \sqcup \llbracket e_n \rrbracket \constraint \llbracket \phi(e_1,...,e_n) \rrbracket$ \\

$e \sqcap T$ & $\Longrightarrow$ & $\llbracket e \rrbracket \sqcap T \constraint \llbracket e \sqcap T \rrbracket$ \\
\end{tabular}


\caption{\label{fig:typing-rules} $\VEx$ typing rules for \textbf{dyn}}
\end{figure}
\todo{maybe only show rules for assignments, method call and conditional}
\todo{think about a a marking for precision-constraints that works when printed black-and-white}
\todo{add $e == e$ and $is\_a?$}
\todo{add $\psi(\PL{u}_1,\PL{u}_2)$}

\todo{a little too informal}
The resulting system of set inclusion constraints can be solved by propagating the type information from constructor calls forwards along the data flow and whenever a type $\PL{C}$ reaches a type variable of the form $[\PL{m}(T_1,...,T_n)\rightarrow R]$ (generated by a method call), the method $\PL{C.m}$ of arity $n$ is looked up and the following connection constraints are added

\vspace{0.2cm}
$\llbracket T_i \rrbracket \constraint \llbracket P^{C.m}_i \rrbracket$ for all $i \in \mathbb{N}^1_n,\hspace{0.5cm} \llbracket R^{C.m} \rrbracket \constraint \llbracket R \rrbracket$
\vspace{0.2cm}

Upon reching a fixpoint, the analysis provides a solution $\typingX$ (a typing) mapping each type variable (and thus every sub-statement, variable, parameter and return value) to a union type in $\mathcal{T}$. To mask the initial conversion to SSA we define $\typingX(\VarX) \defined \bigsqcup_i \typingX(\VarX_i)$ for all variables $\VarX$ tracked flow-sensitively.

A typing $\typingX$ is called \emph{consistent} iff for every constraint $\llbracket S_1 \rrbracket \sqsubseteq \llbracket S_2 \rrbracket$ it holds that $\typingX(\llbracket S_1 \rrbracket) \sqsubseteq \typingX(\llbracket S_2 \rrbracket)$. An important property of typing rules is for consistency to imply soundness. Note that the constraints {\precision{marked}} in Figure~\ref{fig:typing-rules} serve to ensure sufficient precision ($\typingX \sqsubseteq \smin_\pi$) rather than soundness. When omitting these constraints, the algorithm outputs a sound typing even when its precision is insufficient to establish type safety.

By intra-procedural path-sensitivity we mean that the algorithm maintains a set of alternatives $\mathit{path}(\locX)$ for each program location $\locX$ and for each $j \in \mathit{path}(L)$ separately derives the types of flow-sensitively tracked variables $\typingX(\Fversion(\VarX,\locX),j)$ as well as results of the previous sub-statement $\typingX(\llbracket S \rrbracket,j)$ with $L = \post{S}$. In $\VEx$, path sensitivity is only introduced by specially marked if$_{PS}$-conditionals duplicating the number of paths leading to them. Note that eliminating equivalent alternatives is important to keep the analysis terminating in the presence of loops\footnote{$\llbracket \VarX \rrbracket \in \{\ClassX_1\} \vee \llbracket \VarX \rrbracket \in \{\ClassX_2\}$ is not equivalent to $\llbracket \VarX \rrbracket \in \{\ClassX_1,\ClassX_2\}$! Otherwise, no assumption could type if $\PL{b}$ then $\VarX := ``foo''$ else $\VarX := 21$ end; $\VarX + \VarX$.}. However, as these markings are only used in the definition of trusted assumptions below, usually $|\mathit{path}(\locX)| = 1$ for all $\locX$.


%
%
%
%
%
%
%
%
%
%
%
%
%
%
%

\myparagraphB{Logic to Typings}
As listed in the requirements in Section~\ref{sec:overview:semi-auto}, verifiers $\Verifier_X$ need an interface for supplying trusted assumptions. Abstractly, one defines a refinement relation $\typingX \stackrel{\TAsrtX,\locX}{\rightarrow} \typingY$ between $\Verifier_X$-typings. Here, $\typingY$ refines $\typingX$ by taking the trusted assumption $\TAsrtX$ at program location $\locX$ into account. For $\VEx$ we first extend \textbf{dyn} expressions by introducing a type filter operation $e ::= \UnionTypeX \sqcap e$ generating the (monotone) typing constraint $\UnionTypeX \sqcap \llbracket e \rrbracket \constraint \llbracket t \sqcap e \rrbracket$. By inserting type filters, it is possible to refine a typing assertion $\TAsrtX \equiv \typingassertion( \typingX,\locX)$ for a program location $\locX$ to $\TAsrtX \wedge \TAsrtX'$ for some asumption $\TAsrtX'$:
\begin{definition}[Refinement of Constraint Systems]
\renewcommand{\PL}[1]{\emph{#1}}
 Let $G$ be a constraint system generated by applying $\VEx$'s constraint generation to a program $\pi$. A \emph{conjunctive refinement step} of $G$ using the trusted assumption $\TAsrtX$ at program location $\locX$ is a quadruple $(G,\tau,\locX,G')$, written $G \Ex{\stackrel{\tau,L}{\rightarrow}} G'$ with $G'$ being the constraint system generated by applying $\VEx$'s constraint generation to the program $\programX$ with the marked conditional $\assertS(\TAsrtX)$ inserted just before program location $\locX$.
 For the definition of $\assertS(\TAsrtX)$, we assume without loss of generality $\TAsrtX \equiv \TAsrtKX_1 \vee ... \vee \TAsrtKX_n$ to be in disjunctive normal form with all conjunctions $\TAsrtKX_i$ mentioning each variable in at most one typing literal:
 
 $\assertS(\TAsrtKX \vee \TAsrtX) \defined$ \PL{if}$_{PS}$ $...$ \PL{then} $\assertS(\TAsrtKX)$ \PL{else} $\assertS(\TAsrtX)$ \PL{fi},
 
 $\assertS(\llbracket \VarX \rrbracket \in \UnionTypeX \wedge \TAsrtKX) \defined \assertS(\llbracket \VarX \rrbracket \in \UnionTypeX); \assertS(\TAsrtKX)$, \hfill $\assertS(\llbracket \VarX \rrbracket \in \UnionTypeX) \defined \VarX := \UnionTypeX \sqcap \VarX$

\end{definition}
Note that the condition does not matter as $\VEx$ regards conditionals as nondeterministic choice. In essence, if $\TAsrtX$ has $n$ disjuncts $\TAsrtKX_j$ then all paths reaching the marked conditional are split into $n$ paths and in each of them, the types $\llbracket \VarX_i \rrbracket$ of all variables $\VarX_i \in \mathit{free}(\TAsrtKX_j)$ are refined to $\llbracket \VarX_i \rrbracket \sqcap \projection_{\VarX_i}(\TAsrtKX_j)$ for program locations dominated by $L$.
\todo{is there a better way to explain this?}
\todo{maybe illustrate this with a picture}

\section{Appendix: Translation from Typings to Typing Proofs}
  \label{app:typing_to_proof}

\todo{is it neccessary to introduce $\restricted$-Hoare logic?}

We will now show how to translate a given $\VEx$-typing $\typingX$ into a typing proof $\psi$ with $\typingX_\psi = \typingX$. Note that in contrast to type safety proofs this is possible for every sound typing $\typingX$.

%

Recall from Section~\ref{sec:TI2HL} that for a $\VEx$-typing $\typingX$, the global typing invariant $\Ex{\TInv}(\typingX)$ states that the types $\typingX(\typevarIvarX)$ assigned to instance variables $\ClassX.\VarIX$ in $\typingX$ safely over-approximate the actual types of these variables for all runs of the program. Establishing $\Ex{\TInv}(\typingX)$ as an invariant of all method bodies and the main statement is thus an important step in constructing a typing proof. Fortunately, $\Ex{\TInv}(\typingX)$ can be shown to be invariant under most proof rules in our logic. The only complicated cases are the rules for assignment to instance variables and object creation. For these cases, we introduce new rules that explicitly preserve the global typing invariant:

\noindent RULE: $\restricted$-IASGN
\begin{center}
\AxiomC{$\{\Ex{\TInv}(\typingX) \wedge \TAsrtX\} \ExprX \{\Ex{\TInv}(\typingX) \wedge \TAsrtY[\mathit{this}.\VarIX := \result]\}$}
\RightLabel{with $\VarIX \in \Var_I$}
\UnaryInfC{$\{\Ex{\TInv}(\typingX) \wedge \TAsrtX\} \VarIX := \ExprX \{\Ex{\TInv}(\typingX) \wedge \TAsrtY\}$}
\DisplayProof
\end{center}
where $\TAsrtX \rightarrow \llbracket this \rrbracket \in \{C\}$, $\TAsrtY \rightarrow \llbracket \result \rrbracket \in \UnionTypeX$,
$\UnionTypeX \in 2^{\Class}$, $\UnionTypeX \sqsubseteq \typingX(\llbracket V_{C.\VarIX} \rrbracket)$.

\noindent AXIOM: $\restricted$-NEW
\begin{center}
$\{ \Ex{\TInv}(\typingX) \wedge \TAsrtX[\result := \mathbf{new}_C] \} \mathbf{new}_C \{ \Ex{\TInv}(\typingX) \wedge  \TAsrtX\}$
\end{center}

Formally deriving soundness of these rules requires intricate details of the substitutions involved and rather lengthy proofs. Intuitively, assignment to instance variables preserves the global typing invariant if the assignment is compatible with the typing ($\UnionTypeX \sqsubseteq \typingX(\llbracket V_{C.\VarIX} \rrbracket)$) and object creation does so because it initializes all instance variables to $\Vnull$ and thus the newly created object satisfies the global typing invariant. For a detailed treatment of the substitutions, the interested reader is refered to \cite{DeBoerPierik2003,AptDeBoerOlderogBook2009}.

The function $\Xi$ defined below will construct typing proofs from proof steps like
\AxiomC{$\phi_1$}
\AxiomC{$\ldots$}
\AxiomC{$\phi_n$}
\RightLabel{\scriptsize(RULE)}
\TrinaryInfC{$X$}
\DisplayProof

\begin{floatingfigure}[l]{4.5cm}
\vspace{0.1cm}
\end{floatingfigure}
\noindent where $X$ is a conclusion of the form $A \deriv \{p\} S \{q\}$, RULE is the name of the proof rule applied, and the $\phi_i$ for $i \in \mathbb{N}_n$ are subproofs establishing the premises. For reasoning about recursive method calls, the REC rule needs a set of assumptions about the methods in $\pi$.
\begin{definition}\label{def_assumptions}
 \renewcommand{\PL}[1]{\emph{#1}}
For a \textbf{dyn} program $\pi$ with classes $\Class$ each with a set of methods $\Method_{\PL{C}}$ and a typing $\typingX$ for $\pi$, the set $\mathcal{A}_{\pi,\typingX}$ of method call assumptions is
\[ \mathcal{A}_{\pi,\typingX} \defined \{ \{ \pCm_{\PL{C.m}} \} \PL{v}_0.\PL{m}(\PL{v}_1,...,\PL{v}_n) \{ \qCm_{\PL{C.m}} \} \mid \exists \PL{C} \in \Class, \PL{m} \in \Method_{\PL{C}} \} \]
where $\pCm_{\PL{C.m}} \defined \TInv(\typingX) \wedge \llbracket \PL{v}_0 \rrbracket \in \{\PL{C}\} \wedge \bigwedge_{i = 1}^n \llbracket \PL{v}_i \rrbracket \in \typingX(\llbracket P^i_{\PL{C.m}} \rrbracket) \wedge \bigwedge_{i = 1}^k \llbracket \PL{v}_0.\PL{@v}_i \rrbracket \in \typingX(V_{\PL{C.@v}_i})$ and 
$\qCm_{\PL{C.m}} \defined \TInv(\typingX) \wedge \llbracket \result \rrbracket \in \typingX(\llbracket R_{\PL{C.m}} \rrbracket)$, method $\PL{m}$ is of arity $n$ and $\PL{C}$ has $k$ instance variables.
\end{definition}

\todo{the function $Xi$ is only an example for translating $V_{Ex}$ typings into Hoare logic. Stress again that this is always possible.}
\todo{think about the order of preliminaries}
We are now ready to state the definition of $\Xi$:
\todo{necessary?}
%
\begin{definition} \textbf{Translation $\Xi$ for Programs} \newline
 \renewcommand{\PL}[1]{\emph{#1}}
Given a \textbf{dyn} program $\pi$ with a main statement $S$ and a set of methods $\Method$ and a $\VEx$-typing $\typingX$ for $\pi$, the function $\Xi(\pi,\typingX)$ yields a typing proof for $\typingX$. It is defined as follows:
\begin{center}
$\Xi(\pi,\typingX) \defined$
\AxiomC{$\begin{matrix}
          \Xi(\mathcal{A}_{\pi,\typingX} \deriv \{true\} S \{true\},\typingX) \\
          \Xi(\mathcal{A}_{\pi,\typingX} \deriv b_{\PL{C.m}},\typingX) \text{ for } \PL{C.m}(\Vector{\PL{u}}) \{S_{\PL{C.m}}\} \in \Method
         \end{matrix}$}
\RightLabel{\scriptsize(REC)}
\UnaryInfC{$\deriv \{\mathit{true}\} S \{ \mathit{true} \}$} 
\DisplayProof
\end{center}

\noindent with $b_{\PL{C.m}} \defined \{\text{\u{p}}_{\PL{C.m}}\} \PL{begin local this}, \Vector{\PL{u}} := \Vector{\PL{v}}; S_{\PL{C.m}} \text{ end} \{\text{\u{q}}_{\PL{C.m}}\}$ where $\pCm_{\PL{C.m}}$ and $\qCm_{\PL{C.m}}$ are given in Definition~\ref{def_assumptions}.
\end{definition}
\begin{definition} \textbf{Translation $\Xi$ for Statements and Expressions} \newline
\renewcommand{\PL}[1]{\emph{#1}}
 For a given $\VEx$-typing $\typingX$ of a \textbf{dyn} statement $S$ and a Hoare logic statement $X$ of the form
 \[X \;\; \equiv \;\; A \deriv \{\TInv(\typingX) \wedge \typingassertion(\typingX,\pre{S})\} S \{\TInv(\typingX) \wedge \typingassertion(\typingX,\post{S})\}\]
 with $A$ a set of assumptions, $\Xi(X,\typingX)$ yields a typing proof for $\typingX$ with precondition $\typingassertion(\typingX,\pre{S})$ and under the assumptions $A$. 

$\Xi$ is defined inductively over the structure of $S$ and in essence models the reasoning of the verifier $\Ex{V}$ as an equivalent combination of Hoare logic rule applications. The Hoare triples of above form are then assembled into a full typing proof.

\todo{some cases -- assigment to instance vars}
  \begin{itemize}
  \item[if] $S \equiv \PL{null}$, $\typingassertion(\typingX,\pre{S}) \equiv \typingassertion(\typingX,\post{S})[\result := \Vnull]$ then
  \begin{center}
   $\Xi(X,\typingX) =$
   \AxiomC{}
   \RightLabel{\scriptsize(CONST)\footnote{\label{footnote-label} $\TInv(\typingX)$ does not contain $\result$ and is hence invariant under the substitution}}
   \UnaryInfC{$A \deriv \{\TInv(\typingX) \wedge \typingassertion(\typingX,\pre{S})\} \PL{null} \{ \TInv(\typingX) \wedge \typingassertion(\typingX,\post{S}) \}$} 
   \DisplayProof
  \end{center}

  \item[if] $S \equiv \VarLX$, $\typingassertion(\typingX,\pre{S}) \equiv \typingassertion(\typingX,\post{S})[\result := \VarLX]$ then
  \begin{center}
   $\Xi(X,\typingX) =$
   \AxiomC{}
   \RightLabel{\scriptsize(VAR)}
   \UnaryInfC{$A \deriv \{\TInv(\typingX) \wedge \typingassertion(\typingX,\pre{S})\} \VarLX \{ \TInv(\typingX) \wedge \typingassertion(\typingX,\post{S}) \}$}
   \DisplayProof
  \end{center}

  \item[if] $S \equiv \VarIX$, $\typingassertion(\typingX,\pre{S}) \equiv \typingassertion(\typingX,\post{S})[\mathit{this}.\VarIX := \result]$ then
  \begin{center}
   $\Xi(X,\typingX) =$
   \AxiomC{}
   \RightLabel{\scriptsize(IVAR)}
   \UnaryInfC{$A \deriv \{\TInv(\typingX) \wedge \typingassertion(\typingX,\pre{S})\} \VarIX \{ \TInv(\typingX) \wedge \typingassertion(\typingX,\post{S}) \}$}
   \DisplayProof
  \end{center}

  \item[if] $S \equiv \VarLX := e$, $\typingassertion(\typingX, \pre{S}) \rightarrow \typingassertion(\typingX,\pre{e})$, $\typingassertion(\typingX,\post{e}) \rightarrow \typingassertion(\typingX,\post{S})[\VarLX := \result]$ then
  \begin{center}
   $\Xi(X,\typingX) =$
   \AxiomC{$A \deriv \{\TInv(\typingX) \wedge \typingassertion(\typingX,\pre{e})\} e \{ \TInv(\typingX) \wedge \typingassertion(\typingX,\post{e}) \}$}
   \RightLabel{\scriptsize(CONS)}
   \UnaryInfC{$A \deriv \{\TInv(\typingX) \wedge \typingassertion(\typingX,\pre{S})\} e \{ \TInv(\typingX) \wedge \typingassertion(\typingX,\post{S})[\VarLX := \result] \}$}
   \RightLabel{\scriptsize(ASGN)}
   \UnaryInfC{$A \deriv \{\TInv(\typingX) \wedge \typingassertion(\typingX,\pre{S})\} \VarLX := e \{ \TInv(\typingX) \wedge \typingassertion(\typingX,\post{S}) \}$}
   \DisplayProof
  \end{center}

  \item[if] $S \equiv \VarIX := e$, $\typingassertion(\typingX, \pre{S}) \rightarrow \typingassertion(\typingX,\pre{e})$, $\typingassertion(\typingX,\post{e}) \rightarrow \typingassertion(\typingX,\post{S})[\mathit{this}.\VarIX := \result]$ then
  \begin{center}
   $\Xi(X,\typingX) =$
   \AxiomC{$A \deriv \{\TInv(\typingX) \wedge \typingassertion(\typingX,\pre{e})\} e \{ \TInv(\typingX) \wedge \typingassertion(\typingX,\post{e}) \}$}
   \RightLabel{\scriptsize(CONS)}
   \UnaryInfC{$A \deriv \{\TInv(\typingX) \wedge \typingassertion(\typingX,\pre{S})\} e \{ \TInv(\typingX) \wedge \typingassertion(\typingX,\post{S})[\mathit{this}.\VarIX := \result] \}$}
   \RightLabel{\scriptsize($\restricted$-IASGN)}
   \UnaryInfC{$A \deriv \{\TInv(\typingX) \wedge \typingassertion(\typingX,\pre{S})\} \VarIX := e \{ \TInv(\typingX) \wedge \typingassertion(\typingX,\post{S}) \}$}
   \DisplayProof
  \end{center}

  \item[if] $S \equiv S \equiv \PL{if} e \PL{ then} S_1 \PL{else} S_2 \PL{ end}$, $\typingassertion(\typingX,\post{e}) \rightarrow \llbracket \result \rrbracket \in \{\mathit{bool}\}$, $\typingassertion(\typingX,\pre{S}) \rightarrow \typingassertion(\typingX,\pre{e})$, $\typingassertion(\typingX,\post{e}) \rightarrow \typingassertion(\typingX,\pre{S_1})$, $\typingassertion(\typingX,\post{e}) \rightarrow \typingassertion(\typingX,\pre{S_2})$, $\typingassertion(\typingX,\post{S_1}) \rightarrow \typingassertion(\typingX,\post{S})$, $\typingassertion(\typingX,\post{S_2}) \rightarrow \typingassertion(\typingX,\post{S})$ then
  \begin{center}
   $\Xi(X,\typingX) =$
   \AxiomC{$\begin{matrix}
                  A \deriv \{\TInv(\typingX) \wedge \typingassertion(\typingX,\pre{e})\} e \{ \TInv(\typingX) \wedge \typingassertion(\typingX,\post{e}) \} \\
                  A \deriv \{\TInv(\typingX) \wedge \typingassertion(\typingX,\pre{S_1})\} S_1 \{ \TInv(\typingX) \wedge \typingassertion(\typingX,\post{S_1}) \} \\
                  A \deriv \{\TInv(\typingX) \wedge \typingassertion(\typingX,\pre{S_2})\} S_2 \{ \TInv(\typingX) \wedge \typingassertion(\typingX,\post{S_2}) \}
                \end{matrix}$}

   \RightLabel{\scriptsize(CONS)}

   \UnaryInfC{$\begin{matrix}
                  A \deriv \{\TInv(\typingX) \wedge \typingassertion(\typingX,\pre{e})\} e \{ \TInv(\typingX) \wedge \typingassertion(\typingX,\post{e}) \} \\
                  A \deriv \{\TInv(\typingX) \wedge \typingassertion(\typingX,\post{e})\} S_1 \{ \TInv(\typingX) \wedge \typingassertion(\typingX,\post{S_1}) \} \\
                  A \deriv \{\TInv(\typingX) \wedge \typingassertion(\typingX,\post{e})\} S_2 \{ \TInv(\typingX) \wedge \typingassertion(\typingX,\post{S_2}) \}
                \end{matrix}$}
   \RightLabel{\scriptsize(CONS)}
   \UnaryInfC{$\begin{matrix}
                  A \deriv \{\TInv(\typingX) \wedge \typingassertion(\typingX,\pre{S})\} e \{ \TInv(\typingX) \wedge \typingassertion(\typingX,\post{S}) \} \\
                  A \deriv \{\TInv(\typingX) \wedge \typingassertion(\typingX,\pre{S})\} S_1 \{ \TInv(\typingX) \wedge \typingassertion(\typingX,\post{S}) \} \\
                  A \deriv \{\TInv(\typingX) \wedge \typingassertion(\typingX,\pre{S})\} S_2 \{ \TInv(\typingX) \wedge \typingassertion(\typingX,\post{S}) \}
                \end{matrix}$}
   \RightLabel{\scriptsize(COND)}
   \UnaryInfC{$A \deriv \{\TInv(\typingX) \wedge \typingassertion(\typingX,\pre{S})\} \PL{if } e \PL{ then} S_1 \PL{ else } S_2 \PL{ end} \{ \TInv(\typingX) \wedge \typingassertion(\typingX,\post{S}) \}$}
   \DisplayProof
  \end{center}

  \item[if] $S \equiv e_0.\MethodX(e_1,...,e_n)$, $\typingassertion(\typingX,\post{e_i}) \rightarrow \typingassertion(\typingX,\pre{e_{i+1}})$ for $i \in \mathbb{N}_n$, $\typingassertion(\typingX,\pre{S}) \rightarrow \typingassertion(\typingX,\pre{e_0})$, $\{\TInv(\typingX) \wedge \typingassertion(\typingX,\post{e_n})\} \VarLX_0.\MethodX(\VarLX_1,...,\VarLX_n) \{ \TInv(\typingX) \wedge \typingassertion(\typingX,\post{S})\} \in A$ then
  \begin{center}
   $\Xi(X,\typingX) =$
   \AxiomC{$\begin{matrix}
                  A \deriv \{\TInv(\typingX) \wedge \typingassertion(\typingX,\pre{e_i})\} e_i \{ \TInv(\typingX) \wedge \typingassertion(\typingX,\post{e_i}) \} \text{ for } i \in \mathbb{N}_n\\
                \end{matrix}$}
   \RightLabel{\scriptsize(COND + $A$)}
   \UnaryInfC{$\begin{matrix}
                  A \deriv \{\TInv(\typingX) \wedge \typingassertion(\typingX,\pre{S})\} e_i \{ \TInv(\typingX) \wedge \typingassertion(\typingX,\pre{S})[\VarLX_i := \result] \} \text{ for } i \in \mathbb{N}_n\\
                  A \deriv \{\TInv(\typingX) \wedge \typingassertion(\typingX,\pre{S})\} \VarLX_0.\MethodX(\VarLX_1,...,\VarLX_n) \{ \TInv(\typingX) \wedge \typingassertion(\typingX,\post{S}) \}
                \end{matrix}$}
   \RightLabel{\scriptsize(METH)}
   \UnaryInfC{$A \deriv \{\TInv(\typingX) \wedge \typingassertion(\typingX,\pre{S})\} e_0.\MethodX(e_1,...,e_n) \{ \TInv(\typingX) \wedge \typingassertion(\typingX,\post{S}) \}$}
   \DisplayProof
  \end{center}
  
  \end{itemize}
\end{definition}
%
%
%
%
%
\begin{lemma}
\label{lem:cons-valid}
 For every \textbf{dyn} program $\pi$ and every consistent $\VEx$-typing $\typingX$ of $\pi$, $\Xi(\pi,\typingX)$ can be constructed and is a valid typing proof for $\typingX$ in $\pi$.
\end{lemma}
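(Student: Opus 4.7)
The plan is to prove the lemma in two layers, matching the two-level definition of $\Xi$. The outer layer establishes $\vdash \{\mathit{true}\} S \{\mathit{true}\}$ via a single application of REC, so I will first argue that the method call assumptions $\mathcal{A}_{\pi,\typingX}$ from Definition~\ref{def_assumptions} are the ``right'' contracts: each $\pCm_{\PL{C.m}}$ encodes exactly what $\VEx$'s typing constraints for the parameters and instance variables of $\PL{C.m}$ guarantee, and $\qCm_{\PL{C.m}}$ encodes what the constraint $\llbracket S_{\PL{C.m}} \rrbracket \sqsubseteq \llbracket R^{\PL{C.m}} \rrbracket$ guarantees for the return value. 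For this outer step it then suffices that the bodies $b_{\PL{C.m}}$ can be proven from $\mathcal{A}_{\pi,\typingX}$, which is handled by the inductive cases.

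For the inner layer I would proceed by structural induction on the statement/expression $S$, verifying for each case of the definition of $\Xi$ that (a) the side conditions on typing assertions listed in the case are in fact valid implications, (b) the Hoare rule invoked is applicable, and (c) the resulting triple $\{\TInv(\typingX) \wedge \typingassertion(\typingX, \pre{S})\} S \{\TInv(\typingX) \wedge \typingassertion(\typingX, \post{S})\}$ makes the constructed object a typing proof, i.e.\ that $\projection_X^{\result}(\typingassertion(\typingX, \post{S})) \sqsubseteq \typingX(S)$. Condition (a) is the decisive step and is where consistency of $\typingX$ is used: each implication $\typingassertion(\typingX,\post{e}) \to \typingassertion(\typingX, \pre{e'})$ (sequencing), $\typingassertion(\typingX,\post{e}) \to \llbracket\result\rrbracket \in \{\mathit{bool}\}$ (conditional guard), or $\typingassertion(\typingX,\post{e_n}) \to \typingassertion(\typingX,\post{S})[\VarLX_i := \result]$ (method-call arguments) is a direct logical consequence of the corresponding set-inclusion constraint generated by the rules of Figure~\ref{fig:typing-rules}, which by consistency is satisfied by $\typingX$. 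Condition (c) follows from the definition of $\typingassertion$ together with the constraint $\llbracket e \rrbracket$ being summarised in $\typingX(e)$.

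The cases requiring special care are instance-variable assignment, object creation, and recursive method calls. For $\VarIX := e$ I would invoke $\restricted$-IASGN; its precondition ``$\UnionTypeX \sqsubseteq \typingX(\llbracket V_{C.\VarIX} \rrbracket)$'' is precisely the constraint $\llbracket e \rrbracket \sqsubseteq \llbracket V_{C.\VarIX} \rrbracket$ generated for this assignment, and consistency supplies it. For $\PL{new }\ClassX(\Vector{e})$, the axiom $\restricted$-NEW establishes the global invariant $\TInv(\typingX)$ on the newly created object because all instance variables are initialised to $\Vnull$, which lies in every type used in $\TInv(\typingX)$. For method calls, the premise about $\VarLX_0.\MethodX(\Vector{\VarLX})$ is discharged using the assumption in $\mathcal{A}_{\pi,\typingX}$ corresponding to the receiver class constrained by $\typingassertion(\typingX, \pre{S})$, followed by CONS to adjust to the actual pre- and post-assertions; again consistency matches the parameter and return constraints to the assumption's pre- and postcondition.

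The main obstacle will be the interaction between the global typing invariant $\TInv(\typingX)$ and the substitutions introduced by the assignment and method-call rules: specifically, I need that $\TInv(\typingX)$ is syntactically unaffected by substitutions of $\result$, local variables, or $\mathit{this}.\VarIX$ with fresh values compatible with $\typingX$, and that the rules $\restricted$-IASGN and $\restricted$-NEW truly preserve it. This reduces to a routine but technically involved substitution argument in the style of \cite{DeBoerPierik2003,AptDeBoerOlderogBook2009}, relying on the fact that $\TInv(\typingX)$ quantifies universally over objects and only mentions instance variables via types the assignment respects. Once this substitution invariance is established, all remaining cases are mechanical, and assembling the per-case triples into the full proof tree via SEQ and the recursive uses of $\Xi$ yields the required typing proof.
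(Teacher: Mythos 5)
Your proposal follows essentially the same route as the paper: a structural induction over the program matching each case of $\Xi$ against the corresponding typing rule of Figure~\ref{fig:typing-rules}, using consistency of $\typingX$ to discharge the implications between typing assertions that serve as application conditions for the Hoare rules, with the REC/assumption layer handling method calls and the $\restricted$-rules handling instance-variable assignment and object creation. Your additional remarks on the substitution invariance of $\TInv(\typingX)$ correspond to the paper's own acknowledgement that soundness of $\restricted$-IASGN and $\restricted$-NEW requires a lengthy substitution argument deferred to the cited literature.
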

\begin{proof}
 By induction over the structure of the program $\pi$ comparing the application conditions of Hoare logic rules, the typing rules for $\VEx$ and the preconditions for the respective cases in the translation $\Xi$. \hfill $\qed$
\end{proof}

Note that this implies soundness of $\VEx$.

\section{Appendix: Omitted Proofs}
  \label{app:proofs}
  \paragraph{Proof for Theorem~\ref{thm:layer_assertion_homomorphism}}
\begin{proof}
 By definition of $\Theta(\sigma)$, for all variables $\VarX$ of a base type $\BaseTypeX$ in $\sigma$, $\Theta(\sigma) \models \mathit{safe}_\BaseTypeX(\VarX)$ holds and $\VarX$ can hence be safely mapped. Under the assumption that for all such variables $\VarX$ it holds that $\llbracket \VarXVirtual \rrbracket(\Theta(\sigma)) = \llbracket \VarX \rrbracket(\sigma)$, the following lemma can be established by induction over the structure of the assertion language: $\llbracket l \rrbracket(\sigma) = \llbracket \Theta(l) \rrbracket(\Theta(\sigma))$ for all logical expressions $l$ and \textbf{stat} states $\sigma$. As the assumption is guaranteed by the mapping predicates introduced by $\Upsilon_M$, the desired result can then be established by induction over the structure of the assertion language. \hfill $\qed$
\end{proof}

\paragraph{Proof for Theorem~\ref{thm:translation_stat_to_dyn}}
\begin{proof}
 By induction over the structure of the proof $\phi$, using Theorem~\ref{thm:layer_assertion_homomorphism} and the fact that the application conditions for the pure expression rules are satisfied when $S$ is a statically typed program and all assertions where translated using $\Theta$.\todo{actually do the proof!}
\end{proof}

\paragraph{Proof for Lemma~\ref{lem:extractability}}
\begin{proof}
 $\AsrtX' \equiv \AsrtX \wedge \TAsrtX$ has the described properties.
\end{proof}

\paragraph{Proof for Theorem~\ref{thm:extractability}}
\begin{proof}
 Follows from Lemma~\ref{lem:extractablity} and the definition of the most precise typing assertion.
\end{proof}

\paragraph{Proof for Theorem~\ref{thm:relative_completeness}}
\begin{proof}
 Follows from completeness of the Hoare logic, Theorem~\ref{thm:extractability} and the fact that type safety proofs must establish the absence of type errors.
\end{proof}

\paragraph{Proof for Lemma~\ref{lem:cons-valid}}
\begin{proof}
 By comparing the application conditions for cases of $\Xi$ with the typing rules given in Figure~\ref{fig:typing-rules} and the side conditions of applied Hoare-logic rules.
 For instance, in the case for assignment, consistency of the typing implies $\typingX(\llbracket e \rrbracket) = \typingX(\llbracket \PL{u}_i := e \rrbracket)$ and $\typingX(\llbracket e \rrbracket) \subseteq \typingX(\llbracket \PL{u}_i \rrbracket)$, by the definitions for SSA it follows $\typingX(\llbracket e \rrbracket) = \typingX(\llbracket \PL{u} := e \rrbracket)$ and $\typingX(\llbracket e \rrbracket) \subseteq \typingX(\llbracket \PL{u} \rrbracket)$ and that $\PL{u}$ is the only variable whose (flow-sensitive) type may have changed between $\post{e}$ and $\post{S}$ ($\result$ stays the same). With the definition of $\typingassertion(\typingX,L)$ we conclude that $\typingassertion(\typingX,\post{e}) \leftrightarrow \typingassertion(\typingX,\post{S})[\PL{u} := \result] \wedge \llbracket \PL{u} \rrbracket \in t$ for some type $t$. \hfill $\qed$
\end{proof}

\paragraph{Proof of Theorem~\ref{thm:refinement_monotonicity}}
\begin{proof}
 Let $\typingX \stackrel{\tau,L}{\rightarrow} \typingY$ be a conjunctive refinement step
 and $\PL{x} \in \Var_\PL{C.m}$. Then $\typingY(\llbracket \PL{x} \rrbracket, L) = \typingX(\llbracket \PL{x} \rrbracket, L) \sqcap \Omega^{\PL{x}}_X(\tau) \sqsubseteq \typingX(\llbracket \PL{x} \rrbracket, L)$. This difference is induced by the constraints generated from $\mathcal{R}_\tau$. Since all other constraints are identical between $\programX$ and $\programX'$ and all constraints are monotonic, $\typingY(\llbracket S \rrbracket, L) \sqsubseteq \typingX(\llbracket S \rrbracket, L)$ for all substatements $\StmtX$ of $\programX$ and consequently $\typingY \sqsubseteq \typingX$ follows by induction over the constraint system.\hfill $\qed$
\end{proof}

\paragraph{Proof for Theorem~\ref{thm:two-layered-proof-construction}}
\hspace{0.5cm}\newline
For our proof we need the following definition:
\begin{definition}[Fusion of Hoare Logic Proofs]
  Let $\phi$ be a Hoare logic proof for $\{\AsrtX\} \StmtX \{\AsrtY\}$ in some notion of correctness $X$ and $\varphi$ be a typing proof for $\{\TAsrtX\} \StmtX \{\TAsrtY\}$. Then, the fusion $\phi + \varphi$ is a two-layered proof for $\{\AsrtX \wedge \TAsrtX\} \StmtX \{\AsrtY \wedge \TAsrtY\}$ in the sense of $X$-correctness.
\end{definition}

WLOG, we assume $\phi$ and $\varphi$ to be minimal (all Hoare triples contribute to the proof's conclusion). They hence have a tree-like structure. Their fusion can then be constructed by recursion over this structure. 

Induction basis = Fusing axioms. All axioms in our Hoare logic (Appendix~\ref{app:hoare_logic}), are invariant under conjunction: if $\{\AsrtX\} \StmtX \{\AsrtY\}$ and $\{\TAsrtX\} \StmtX \{\TAsrtY\}$ can be derived using this axiom, then $\{\AsrtX \wedge \TAsrtX\} \StmtX \{\AsrtY \wedge \TAsrtY\}$ can also.

Induction step = Fusing rules. All rules in our Hoare logic have the following properties

\begin{itemize}
 \item Invariant under fusion: If
 
  \AxiomC{$\{\AsrtX_1\} \StmtX_1 \{\AsrtY_1\}, ..., \{\AsrtX_n\} \StmtX_n \{\AsrtY_n\}$}
  \RightLabel{\scriptsize(X)}
  \UnaryInfC{$\{\AsrtX\} \StmtX \{\AsrtY\}$}
  \DisplayProof
  
  and
  \AxiomC{$\{\TAsrtX_1\} \StmtX_1 \{\TAsrtY_1\}, ..., \{\TAsrtX_n\} \StmtX_n \{\TAsrtY_n\}$}
  \RightLabel{\scriptsize(X)}
  \UnaryInfC{$\{\TAsrtX\} \StmtX \{\TAsrtY\}$}
  \DisplayProof
  
  are valid rule applications, then
  
  \AxiomC{$\{\AsrtX_1 \wedge \TAsrtX_1\} \StmtX_1 \{\AsrtY_1 \wedge \TAsrtY_1\}, ..., \{\AsrtX_n \wedge \TAsrtX_n\} \StmtX_n \{\AsrtY_n \wedge \TAsrtY_n\}$}
  \RightLabel{\scriptsize(X)}
  \UnaryInfC{$\{\AsrtX \wedge \TAsrtX\} \StmtX \{\AsrtY \wedge \TAsrtY\}$}
  \DisplayProof
  
  is also.

 \item They are either syntax-directed (and hence must appear in both proofs) or have a neutral application   \AxiomC{$\{\AsrtX\} \StmtX \{\AsrtY\}$}
  \RightLabel{\scriptsize(X)}
  \UnaryInfC{$\{\AsrtX\} \StmtX \{\AsrtY\}$} 
  \DisplayProof that can be inserted into a proof to make its structure match the other one (having an application of rule $X$).
\end{itemize}

For most rules, this is obvious. For applications of CONJ and DISJ, one needs to fuse the proof with both premises. To see that the properties hold for the SUBST rule, consider that all variables occurring in typing assertions are being read in some method of the program (otherwise, typing them is useless). Hence, the side-condition of the SUBST rule does not allow them to be substituted for and all applications of this rule hence are neutral for all typing assertions.

Both proofs can hence be made structurally equivalent by inserting neutral rule applications and then fused using the invariance property. \hfill $\qed$

%
%
%
%
%
%
%

\end{document}



